\begin{document}

\title{Extended Weak Convergence and Utility Maximization with Proportional Transaction Costs}

 \author{Erhan Bayraktar \address{Department of Mathematics, University of Michigan} \email{erhan@umich.edu}} \thanks{}
 \author{Leonid Dolinskyi
 \address{the University of the State Fiscal Service, Department of Economic Cybernetics}\email{dolinsky.com@gmail.com}}
\author{Yan Dolinsky  \address{
  Department of Statistics, Hebrew University} \email{yan.dolinsky@mail.huji.ac.il}} \thanks{Y. Dolinsky is supported by the GIF Grant 1489-304.6/2019 and the ISF grant 160/17.}

\date{}

\subjclass[2010]{91B16, 91G10, 60F05}
 \keywords{Utility Maximization, Proportional Transaction Costs, Extended Weak Convergence, Meyer--Zheng Topology }%

\maketitle \markboth{}{}
\renewcommand{\theequation}{\arabic{section}.\arabic{equation}}
\pagenumbering{arabic}

\begin{abstract}
In this paper we
study utility maximization with proportional transaction costs.
Assuming extended weak convergence of the underlying processes
we prove the convergence
of the corresponding utility maximization problems.
Moreover, we establish a limit theorem for the optimal trading strategies.
The proofs are based on the extended weak convergence theory developed in \cite{A} and the Meyer--Zheng topology
introduced in \cite{MZ}.
\end{abstract}
\tableofcontents
\section{Introduction}
\label{intro}
We deal with
the continuity of the utility maximization problem in the presence of proportional transaction costs,
under convergence in distribution of the financial markets.
More specifically, we focus on utility maximization from terminal wealth under
an admissibility condition on the wealth processes.
Although the problem of utility maximization with proportional transaction costs
was widely studied
(see, for instance,
\cite{MC:76, TKA:88, DN:90,DL:91, SS:94, CK:96, DPT:01, G:02, Bruno:02, KK:10, CO:11, GGKS:14, CS:16, BY:18}),
to the best of our knowledge, the continuity under weak convergence
was not considered before.

Clearly, the problem of utility maximization depends
on the flow of information (filtration). Hence, one should not expect that convergence of asset prices
alone will imply the convergence of the corresponding control problems.
In particular, there are many examples (see for instance page 75 in \cite{A}) of processes which are ``close" to each
other in distribution but the behaviour of the corresponding filtrations is completely different. This brings us to
a stronger
notion of convergence.

In his important unpublished manuscript \cite{A}, Aldous
introduced the notion of extended weak convergence
and showed that for optimal stopping this is the right notion of convergence.
Extended weak convergence is defined via weak convergence of
the corresponding prediction processes. The prediction process is a
measure valued process representing the regular
conditional distributions of the original stochastic process, and so it captures the structure of the information flow.
It is important to mention the recent papers \cite{BBBE1,BBBE2}
which provide a novel approach in the discrete time setup to
weak convergence topologies that take the information flow into account.

In this work we consider a sequence of continuous time
financial markets
with continuous asset prices
which converge to a limit market.
To ease notations we focus on the case of one risky asset.
Our main assumptions are extended weak convergence of the underlying processes and an existence of
strictly consistent price systems.
Under these natural conditions we prove that for a continuous and concave utility function,
there is a convergence of expected utilities; see Theorem~\ref{thm2.1}.
Moreover, we obtain a limit theorem for the optimal trading strategies; see Theorem~\ref{thm2.2}.
This paper is the first work that applies extended weak convergence to continuous time portfolio optimization problems.

In addition to the extended weak convergence, we also apply the
Meyer--Zheng topology which was introduced in
\cite{MZ}. Roughly speaking, the Meyer--Zheng
topology on the set of functions
is the topology of convergence in measure. As we will see,
this topology perfectly fits
for hedging with proportional transaction costs.
More precisely, the admissibility condition
will imply tightness of the trading strategies
in the Meyer--Zheng topology.

The current work is a continuation of \cite{BDG} where a similar problem was studied in
the frictionless setup (i.e. with no transaction costs).
Surprisingly, for the frictionless setup extended weak convergence is not a sufficient assumption,
and in order to
have convergence for the expected utilities one needs to require convergence of the
equivalent martingale measures (see Assumption 2.5 in \cite{BDG}).
These objects can be viewed as consistent price
systems for the frictionless case.
It is important to emphasize that with the presence of proportional transaction costs there is no need to assume any convergence
structure on the consistent price systems.
Namely, the presence of proportional
transaction costs provides the needed compactness.

The rest of the paper is organized as follows. In the next section we introduce
the setup and list our assumptions. In Section \ref{sec:2}, we formulate our main results, which are proved
in Section \ref{sec:3}.
In Section~\ref{sec:5}, we provide a specific example for financial markets which converge to a stochastic volatility model.
We show that in the presence of proportional transaction costs the expected utilities converge,
while in the frictionless setup there is no convergence.

\section{Preliminaries and Assumptions}\label{sec:1}
\subsection{Hedging with Proportional Transaction Costs}
We consider a model with one risky asset which we denote by
$S=(S_t)_{0\leq t\leq T}$, where $T<\infty$ is the time horizon.
We assume that the investor has a bank account that,
for simplicity, bears no interest. The process $S$ is assumed to be an adapted,
strictly positive and continuous process (not necessarily a semi-martingale) defined on a filtered probability
space $(\Omega,\mathcal F,(\mathcal F_t)_{0\leq t\leq T},\mathbb P)$ where
the filtration $(\mathcal F_t)_{0\leq t\leq T}$ satisfies the usual assumptions (right continuity and completeness).

Let $\kappa\in (0,1)$ be a constant. Consider a model in which every purchase or sale
of the risky asset at time $t\in [0,T]$ is subject to a proportional transaction
cost of rate $\kappa$.
A trading strategy is an adapted process
$\gamma=(\gamma_t)_{0\leq t\leq T}$
of
bounded variation with right continuous paths.
The random variable $\gamma_t$ denotes
the number of shares at time $t$.
We use the convention $\gamma_{0-}=0$. Moreover, we require that $\gamma_T=0$ which means that we
liquidate the portfolio at the maturity date.

Let $\gamma_t:=\gamma^{+}_t-\gamma^{-}_t$, $t\in [0,T]$
be the Jordan
decomposition into a positive variation
$(\gamma^{+}_t)_{0\leq t\leq T}$ and a negative variation
$(\gamma^{-}_t)_{0\leq t\leq T}$.
Since the bid price process is $(1-\kappa) S$ and the ask price process is $(1+\kappa) S$, then the portfolio value of a trading strategy
$\gamma$ at time $t$
is given by
$$V^{\gamma}_t:=(1-\kappa)\int_{0}^t S_ud\gamma^{-}_u-(1+\kappa) \int_{0}^t S_ud\gamma^{+}_u+(1-\kappa) S_t(\gamma_t)^{+}-(1+\kappa)S_t(\gamma_t)^{-}
$$
where all the
above integrals are pathwise Riemann--Stieltjes integrals.
We notice that the integrals take into account the possible transaction at $t=0$.
By rearranging the terms we get:
\begin{equation}\label{2.0}
V^{\gamma}_t=
\gamma_t S_t-\int_{0}^t S_u d\gamma_u-\kappa |\gamma_t|S_t-\kappa\int_{0}^t S_u|d\gamma_u|, \ \ t\in [0,T].
\end{equation}

Observe that the wealth process $(V^{\gamma}_t)_{0\leq t\leq T}$
is RCLL (right continuous with left limits) and from the fact that $\gamma_T=0$ it follows that
$V^{\gamma}_{T-}=V^{\gamma}_T$.
For any initial capital $x>0$ we denote by $\mathcal A(x)$
the set of all trading strategies $\gamma$ which satisfy the admissibility condition
$x+V^{\gamma}_t\geq 0$, for all $t\in [0,T]$.

Next, we introduce our utility maximization problem.
Let $C[0,T]$ be the space
of all continuous functions
$f:[0,T]\rightarrow\mathbb R$ equipped with the uniform topology.
Consider a continuous function $U:(0,\infty) \times C[0,T]\rightarrow\mathbb R$ such that
for any $s\in C[0,T]$ the function $U(\cdot,s)$ is non--decreasing and concave.
We extend $U$ to $\mathbb R_{+}\times C[0,T]$ by
$U(0,s):=\lim_{v\downarrow 0} U(v,s)$ (the limit might be $-\infty$).

\begin{asm}\label{asm2.1}
${}$\\
(i) For any $x>0$ we have
$\mathbb E_{\mathbb P}[U(x,S)]>-\infty$.\\
(ii) There exist continuous functions
$m_1,m_2:[0,1)\rightarrow\mathbb{R}_{+}$ with $m_1(0)=m_2(0)=0$ (modulus of continuity) and
a non-negative random variable $\zeta\in L^1(\Omega,\mathcal F,\mathbb P)$ such that for any $\lambda\in (0,1)$ and
$v>0$
$$
U((1-\lambda) v,S)\geq (1-m_1(\lambda)) U(v,S)-m_2(\lambda) \zeta.
$$
 \end{asm}

For a given initial capital $x>0$ consider the optimization problem
\begin{equation*}
u(x):=\sup_{\gamma\in\mathcal A(x)}\mathbb E_{\mathbb P}[U(x+V^{\gamma}_T,S)],
\end{equation*}
where, for a random variable $X$ which satisfies $\mathbb E_{\mathbb P}[\max(-X,0)]=\infty$ we set
$\mathbb E_{\mathbb P}[X]:=-\infty$.
Let us notice that Assumption \ref{asm2.1}(i) implies $u(x)>-\infty$.
\begin{rem}\label{rem.short}
We should note that the power and the log utility satisfy Assumption \ref{asm2.1}. Moreover, for a continuous function
$g:\mathbb R\rightarrow\mathbb R_{+}$
the utility $U(v,s):=-(g(s_T)-v)^{+}$ which corresponds to shortfall risk minimization for a vanilla option with the payoff
$g(S_T)$, also satisfies Assumption \ref{asm2.1} (provided that $\mathbb E_{\mathbb P}[g(S_T)]<\infty$).
Indeed, in this case,
if $v\geq \frac{g(S_T)}{1-\lambda}$ then
$U((1-\lambda) v,S)=U(v,S)=0.$ If $v<\frac{g(S_T)}{1-\lambda}$, then
$$|U((1-\lambda) v,S)-U(v,S)|\leq \lambda v\leq \frac{\lambda}{1-\lambda} g(S_T).$$
Thus, for
$m_1(\lambda):= 0$, $m_2(\lambda):=\frac{\lambda}{1-\lambda}$ and $\zeta:=g(S_T)$, Assumption \ref{asm2.1} holds true.
\end{rem}
\subsection{Approximating Sequence of Models}
For any $n$,
let $S^{(n)}=(S^{(n)}_t)_{0\leq t\leq T}$.
 be a strictly positive, continuous process defined on some filtered probability space
$(\Omega_n,\mathcal F^{(n)},(\mathcal F^{(n)}_t)_{0\leq t\leq T},\mathbb P_n)$, the filtration
$(\mathcal F^{(n)}_t)_{0\leq t\leq T}$ satisfies the usual assumptions .
For the $n$--th model, a trading strategy
is a right continuous adapted processes
$\gamma^{(n)}=(\gamma^{(n)}_t)_{0\leq t\leq T}$
of bounded variation which satisfies $\gamma^{(n)}_T=0$. As before, we use the convention that
$\gamma^{(n)}_{0-}=0$. The corresponding portfolio value is
given by
\begin{equation*}
V^{\gamma^{(n)}}_t:=\gamma^{(n)}_tS^{(n)}_t-\int_{0}^t S^{(n)}_ud\gamma^{(n)}_u-\kappa |\gamma^{(n)}_t|S^{(n)}_t -\kappa\int_{0}^t S^{(n)}_u|d\gamma^{(n)}_u|, \ \ t\in [0,T].
\end{equation*}
For any $x>0$ we denote by $\mathcal A^{(n)}(x)$ the set of all trading strategies $\gamma^{(n)}$ which satisfy
$x+V^{\gamma^{(n)}}_t\geq 0$, for all $t\in [0,T]$.
Set $$u_n(x):=\sup_{\gamma^{(n)}\in\mathcal A^{(n)}(x)}\mathbb E_{\mathbb P_n}\left[U\left(x+V^{\gamma^{(n)}}_T,S^{(n)}\right)\right].$$

The following assumption will be essential for proving tightness and some integrability properties of the admissible trading strategies.
\begin{asm}\label{asm2.2}
There exists $\varepsilon\in(0,\kappa)$
and probability measures $\mathbb Q\sim\mathbb P$, $\mathbb Q_n\sim\mathbb P_n$, $n\in\mathbb N$
with the following properties:\\
1)There exists $\mathbb Q$ local--martingale $(M_t)_{0\leq t\leq T}$
and for any $n\in\mathbb N$ there exists a $\mathbb Q_n$ local--martingale $(M^{(n)}_t)_{0\leq t\leq T}$ such that
\begin{equation*}
|M-S|\leq (\kappa-\varepsilon)S, \ \ |M^{(n)}-S^{(n)}|\leq (\kappa-\varepsilon)S^{(n)}, \ \ \forall n\in\mathbb N.
\end{equation*}
2) The sequence of probability measures $\mathbb P_n$, $n\in\mathbb N$, is
contiguous to the sequence $\mathbb Q_n$, $n\in\mathbb N$. Namely,
for any sequence of events $A_n\in\mathcal F^{(n)}$, $n\in\mathbb N$
if $\lim_{n\rightarrow\infty} \mathbb Q_n(A_n)=0$ then
$\lim_{n\rightarrow\infty} \mathbb P_n(A_n)=0$.
\end{asm}
\begin{rem}
Assumption \ref{asm2.2} can be viewed as a uniform version of
 the existence of strictly consistent price systems.
For a given model the existence of a strictly consistent price system is equivalent to
robust no free lunch with vanishing risk condition for simple
strategies; see e.g. \cite{GLR}.
\end{rem}

Next, we assume the
following uniform integrability assumptions.
\begin{asm}\label{asm2.2+}
${}$\\
(i)  For any $x>0$
the set $\{U^{-}(x,S^{(n)})\}_{n\in\mathbb N}$ is uniformly integrable
where $U^{-}:=\max(-U,0)$. \\
(ii) For any $x>0$ the set $\left\{U^{+}
\left(x+V^{\gamma^{(n)}}_T,S^{(n)}\right)\right\}_{n\in\mathbb N,
\gamma^{(n)}\in\mathcal A^{(n)}(x)}$ is uniformly integrable, where
$U^{+}:=\max(U,0)$.
\end{asm}
Notice, that from Assumption \ref{asm2.2+}
\begin{equation*}\label{2.finite}
-\infty<\lim\inf_{n\rightarrow\infty} u_n(x)\leq \lim\sup_{n\rightarrow\infty} u_n(x)<\infty, \ \ \forall x>0.
\end{equation*}
In general, if $U$ is not bounded from above, then the verification of Assumption \ref{asm2.2+}(ii) can be a difficult task.
The following result gives quite general and easily verifiable conditions
which are sufficient for Assumption~\ref{asm2.2+} to hold true.
\begin{prop}\label{prop1}
Suppose there exist constants $L>0$, $0<\alpha<1$ and $q>\frac{1}{1-\alpha}$ which satisfy the following. \\
(I) For all $(v,s)\in (0,\infty) \times C[0,T]$,
$U(v,s)\leq L (1+v^{\alpha}).$\\
(II) For any $n\in\mathbb N$ there exists
a probability measure $\hat {\mathbb Q}_n\sim \mathbb P_n$, a $\hat{\mathbb Q}_n$ local--martingale $(\hat M^{(n)}_t)_{0\leq t\leq T}$ such that
$|\hat M^{(n)}-S^{(n)}|\leq \kappa S^{(n)}$ (i.e. $(\hat M^{(n)},\hat {\mathbb Q}_n)$ is a consistent price system)
and
$
\sup_{n\in\mathbb N}\mathbb E_{{{\hat{\mathbb Q}}}_n}\left[\left(\frac{d{\mathbb P}_n}{d{\hat{\mathbb Q}}_n}\right)^{q}\right]<\infty.
$\\
Then Assumption \ref{asm2.2+}(ii) holds true.
\end{prop}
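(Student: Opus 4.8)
\noindent\emph{Proof strategy.} The plan is to deduce the asserted uniform integrability from a uniform $L^{p}$--bound, for some $p>1$, on the admissible terminal wealths $x+V^{\gamma^{(n)}}_T$, and to obtain that bound in two moves: first dominate the transaction--cost wealth by a stochastic integral against the martingale $\hat M^{(n)}$ furnished by~(II), then change measure from $\hat{\mathbb Q}_n$ to $\mathbb P_n$ via H\"older's inequality. Condition~(I) is what converts a polynomial bound on the wealth into a bound on $U^{+}$, and the exponent restriction $q>\tfrac1{1-\alpha}$ is precisely what leaves room to choose such a $p>1$.

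First I would reduce the problem. By~(I), $U^{+}(x+V^{\gamma^{(n)}}_T,S^{(n)})\leq L\bigl(1+(x+V^{\gamma^{(n)}}_T)^{\alpha}\bigr)$ for every $n$ and every $\gamma^{(n)}\in\mathcal A^{(n)}(x)$, so it suffices to exhibit $p>1$ with
\[
\sup_{n\in\mathbb N}\ \sup_{\gamma^{(n)}\in\mathcal A^{(n)}(x)}\mathbb E_{\mathbb P_n}\!\left[(x+V^{\gamma^{(n)}}_T)^{\alpha p}\right]<\infty ,
\]
since then $\{U^{+}(x+V^{\gamma^{(n)}}_T,S^{(n)})\}_{n,\gamma^{(n)}}$ is bounded in $L^{p}$ for the respective measures $\mathbb P_n$, hence uniformly integrable. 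Next, fixing $n$ and writing $\gamma=\gamma^{(n)}$, $M=\hat M^{(n)}$, $S=S^{(n)}$, I would introduce the ``martingale--priced'' value
\[
\tilde V_t:=\gamma_tM_t-\int_0^tM_u\,d\gamma_u,\qquad t\in[0,T],
\]
and check, straight from~\eqref{2.0} and the pointwise estimates $\gamma_t(M_t-S_t)\geq-\kappa|\gamma_t|S_t$ and $-(M_u-S_u)\,\mathrm{sgn}(d\gamma_u)\geq-\kappa S_u$ (both immediate from $|M-S|\leq\kappa S$), that $V^{\gamma}_t\leq\tilde V_t$ for all $t$; in particular, by admissibility, $x+\tilde V_t\geq x+V^{\gamma}_t\geq0$.

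Then I would observe that, since $\gamma$ has finite variation and right--continuous paths (so $\gamma_{-}$ is predictable and locally bounded) while $M$ is c\`adl\`ag, integration by parts---with $[\gamma,M]_t=\sum_{s\leq t}\Delta\gamma_s\Delta M_s$ absorbing the gap between $\int_0^tM_u\,d\gamma_u$ and $\int_0^tM_{u-}\,d\gamma_u$---gives $\tilde V_t=\int_0^t\gamma_{u-}\,dM_u$, a $\hat{\mathbb Q}_n$--local martingale with $\tilde V_0=0$ (the convention $\gamma_{0-}=0$ kills the boundary term at $t=0$). Being nonnegative, $x+\tilde V$ is a $\hat{\mathbb Q}_n$--supermartingale, so $\mathbb E_{\hat{\mathbb Q}_n}[x+\tilde V_T]\leq x$. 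Finally, using $q>\tfrac1{1-\alpha}$ I would fix $p\in\bigl(1,\tfrac{q-1}{\alpha q}\bigr]$ and set $q'=\tfrac{q}{q-1}$, so that $\alpha pq'\leq1$; combining $0\leq x+V^{\gamma}_T\leq x+\tilde V_T$ with H\"older's inequality, the elementary bound $w^{\beta}\leq1+w$ valid for $w\geq0$ and $\beta\in[0,1]$, and the supermartingale estimate,
\begin{align*}
\mathbb E_{\mathbb P_n}\!\left[(x+V^{\gamma}_T)^{\alpha p}\right]
&\leq\left(\mathbb E_{\hat{\mathbb Q}_n}\!\left[\Bigl(\tfrac{d\mathbb P_n}{d\hat{\mathbb Q}_n}\Bigr)^{q}\right]\right)^{1/q}\left(\mathbb E_{\hat{\mathbb Q}_n}\!\left[(x+\tilde V_T)^{\alpha pq'}\right]\right)^{1/q'}\\
&\leq\left(\sup_{k\in\mathbb N}\mathbb E_{\hat{\mathbb Q}_k}\!\left[\Bigl(\tfrac{d\mathbb P_k}{d\hat{\mathbb Q}_k}\Bigr)^{q}\right]\right)^{1/q}(1+x)^{1/q'},
\end{align*}
which by~(II) is finite and independent of $n$ and of $\gamma^{(n)}$---exactly the bound demanded above.

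The step I expect to be the main obstacle is the middle one: since $\hat M^{(n)}$ is not assumed continuous, the identity $\tilde V=\int\gamma_{-}\,dM$ and, above all, the passage from a nonnegative local martingale to a genuine supermartingale (via a localizing sequence together with Fatou's lemma and the admissibility bound $x+\tilde V\geq0$) must be carried out with care, as must the bookkeeping of the possible transaction at time $0$. The reduction in the second paragraph and the H\"older estimate at the end are routine.
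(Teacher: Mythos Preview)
Your proposal is correct and follows exactly the approach the paper has in mind: the paper's own proof merely cites Proposition~1.6 in \cite{S} for the supermartingale property $\mathbb E_{\hat{\mathbb Q}_n}[V^{\gamma^{(n)}}_T]\leq 0$ and Lemma~2.2 in \cite{BDG} for the H\"older/change-of-measure step, and you have faithfully unpacked both ingredients. Your domination $V^{\gamma}\leq\tilde V$, the integration-by-parts identification $\tilde V_t=\int_0^t\gamma_{u-}\,dM_u$, and the choice $p\in(1,(q-1)/(\alpha q)]$ with $\alpha p q'\leq 1$ are all sound; the care you flag about the possible jumps of $\hat M^{(n)}$ and the transaction at $t=0$ is appropriate but causes no trouble, as your own bookkeeping shows.
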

\begin{proof}
The proof is done by using similar arguments as in Lemma 2.2 in \cite{BDG}. The only needed property is that
for any $n\in\mathbb N$ and $\gamma^{(n)}\in\mathcal A_n(x)$
we have $\mathbb E_{\hat{\mathbb Q}_n}[V^{\gamma^{(n)}}_T]\leq  0$.
This is a well known property of consistent price systems (see, for instance, Proposition 1.6 in \cite{S}).
\end{proof}
We end this section with an example of market models which satisfy Assumptions \ref{asm2.2}--\ref{asm2.2+}.
\begin{exm}
Assume that for any $n$ the process $(S^{(n)}_t)_{0\geq t\leq T}$ is given by the SDE
\begin{equation}\label{SDE}
dS^{(n)}_t=S^{(n)}_t\left(\mu^{(n)}_t dt+\sigma^{(n)}_t dW^{(n)}_t\right), \ \ t\in [0,T]
\end{equation}
where $W^{(n)}$ is a Brownian motion and the processes $\mu^{(n)},\sigma^{(n)}$ are predictable, with respect to $\mathcal F^{(n)}$.
Similarly, assume that
\begin{equation*}
dS_t=S_t\left(\mu_t dt+\sigma_t dW_t\right), \ \ t\in [0,T]
\end{equation*}
where $W$ is a Brownian motion and the processes $\mu,\sigma$ are predictable, with respect to $\mathcal F$.
Moreover, we assume that $\sigma,\sigma^{(n)}$ are invertible and there exists a constant $\mathcal C$ (which does not depend on $n$) such that
\begin{equation}\label{example}
\sup_{0 \leq t\leq  T}\left|\frac{\mu_t}{\sigma_t}\right |,\sup_{0 \leq t\leq  T}\left|\frac{\mu^{(n)}_t}{\sigma^{(n)}_t}\right |< \mathcal C \  \mbox{a.s} \ \ \forall{n\in\mathbb N}.
\end{equation}
Then,
from the Girsanov theorem it follows that
Assumption \ref{asm2.2}(1) holds true for
$M:=S$, $M^{(n)}:=S^{(n)}$, $n\in\mathbb N$  and
\begin{eqnarray*}
&\frac{d\mathbb Q}{d\mathbb P}:=\exp\left(-\int_{0}^T\frac{\mu_t}{\sigma_t}dW_t-
\int_{0}^T\frac{1}{2}\left(\frac{\mu_t}{\sigma_t}\right)^2 dt\right), \\
&\frac{d\mathbb Q_n}{d\mathbb P_n}:=\exp\left(-\int_{0}^T\frac{\mu^{(n)}_t}{\sigma^{(n)}_t}dW^{(n)}_t-
\int_{0}^T\frac{1}{2}\left(\frac{\mu^{(n)}_t}{\sigma^{(n)}_t}\right)^2 dt\right), \ \ n\in\mathbb N.
\end{eqnarray*}
From (\ref{example}) we have
$$\sup_{n\in\mathbb N}\mathbb E_{{{{\mathbb Q}}}_n}\left[\left(\frac{d{\mathbb P}_n}{d{{\mathbb Q}}_n}\right)^{q}\right]<\infty , \ \ \forall q\in\mathbb R,$$
and so Assumption \ref{asm2.2}(2) holds true.
Moreover, in view of Proposition \ref{prop1}, it follows that Assumption \ref{asm2.2+} holds true
provided that there exists $L,\alpha>0$ such that
$$U(v,s)\leq L (1+v^{\alpha}), \ \ \forall (v,s)\in (0,\infty) \times C[0,T].$$

Another type of example can be obtained by linearly
interpolating discrete time processes, which are discrete time analogs of (\ref{SDE}).
In Section \ref{sec:5} we provide a detailed analysis of a particular example of this type.
\end{exm}

\subsection{Extended Weak Convergence}
We start with formulating
our convergence assumptions.
\begin{asm}\label{asm2.3}
For any $k\in\mathbb N$ let $\mathbb D([0,T];\mathbb R^k)$ be the space
of all RCLL functions
$f:[0,T]\rightarrow\mathbb R^k$ equipped with the Skorokhod topology (for details see \cite{B}).
We assume that there exists $m\in\mathbb N$ and a stochastic processes
$X^{(n)}:\Omega_n\rightarrow \mathbb D([0,T];\mathbb R^m)$, $n\in\mathbb N$,
$X:\Omega\rightarrow C([0,T];\mathbb R^m)$ (i.e. $X$ is continuous) which satisfy the following:\\
(i) The filtrations $(\mathcal F^{(n)}_t)_{0\leq t\leq T}$, $n\in\mathbb N$ and $(\mathcal F_t)_{0\leq t\leq T}$, are the usual filtrations generated
by $X^{(n)}$, $n\in\mathbb N$ and $X$, respectively. \\
(ii) We have the weak convergence
\begin{equation*}
(S^{(n)}, X^{(n)})\Rightarrow (S,X) \ \ \mbox{on} \ \ \mathbb D([0,T];\mathbb R^{m+1}).
\end{equation*}
(iii) We have the extended weak convergence
$X^{(n)}\Rrightarrow X$. This means that for any $k$ and a
continuous bounded function $\psi: \mathbb D([0,T];\mathbb R^m)\rightarrow\mathbb R^k$
we have
\begin{equation*}
(X^{(n)},Y^{(n)})\Rightarrow (X,Y) \ \ \mbox{on} \ \ \mathbb D([0,T];\mathbb R^{m+k}),
\end{equation*}
where
$$Y^{(n)}_t:=\mathbb E_{\mathbb P_n}\left(\psi(X^{(n)})|\mathcal F^{(n)}_t\right) \ \ \mbox{and} \ \
Y_t:=\mathbb E_{\mathbb P}\left(\psi(X)|\mathcal F_t\right), \ \ t\in [0,T].$$
\end{asm}
\begin{rem}\label{extend}
In \cite{A} Aldous introduced the notion of “extended weak convergence” via prediction processes.
He proved that extended weak convergence
is equivalent to a more elementary condition which does
not require the use of prediction processes (see \cite{A}, Proposition 16.15).
This is the definition that we use above.

The verification of extended weak convergence was studied in \cite{A} and \cite{JS}.
Citing Aldous (page 130 in \cite{A}) ``any weak convergence result proved by the martingale
technique can be improved to extended weak convergence". In particular if the processes $X^{(n)}$, $n\in\mathbb N$ have independent increments
and $X$ is continuous in probability then the weak convergence
$X^{(n)}\Rightarrow X$ implies the extended weak convergence
$X^{(n)}\Rrightarrow X$ (see Proposition 20.18 in \cite{A} and Corollary 2 in \cite{JS}). For more results related to extended weak convergence
see \cite{JS}.
\end{rem}
\section{Main Results}\label{sec:2}

We are ready to state our first limit theorem.
\begin{thm}\label{thm2.1}
Under Assumptions \ref{asm2.1}-\ref{asm2.3} we have that
\begin{equation*}
u(x)=\lim_{n\rightarrow\infty} u_n(x),
\end{equation*}
for any $x>0$.
\end{thm}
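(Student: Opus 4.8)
The plan is to prove the two inequalities $\limsup_{n\to\infty} u_n(x) \le u(x)$ and $\liminf_{n\to\infty} u_n(x) \ge u(x)$ separately, the first being the genuinely hard direction. For the $\limsup$ direction, I would take for each $n$ a near-optimal strategy $\gamma^{(n)} \in \mathcal A^{(n)}(x)$, i.e.\ one with $\mathbb E_{\mathbb P_n}[U(x+V^{\gamma^{(n)}}_T, S^{(n)})] \ge u_n(x) - 1/n$, and argue that (along a subsequence) the joint laws of $(S^{(n)}, X^{(n)}, \gamma^{(n)}, V^{\gamma^{(n)}})$ converge. The key technical point is the choice of topology for the strategies: since $\gamma^{(n)}$ has bounded variation but no uniform bound on total variation, one cannot hope for tightness in the Skorokhod topology, which is exactly why the Meyer--Zheng topology enters. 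I would use Assumption \ref{asm2.2}: the existence of the $\mathbb Q_n$--local-martingales $M^{(n)}$ with $|M^{(n)} - S^{(n)}| \le (\kappa - \varepsilon)S^{(n)}$ gives, for admissible $\gamma^{(n)}$, a control of the form $\varepsilon \int_0^T S^{(n)}_u |d\gamma^{(n)}_u| \le x + (\text{a $\mathbb Q_n$--supermartingale-type term})$, so that $\mathbb E_{\mathbb Q_n}\big[\int_0^T S^{(n)}_u |d\gamma^{(n)}_u|\big]$ is bounded uniformly in $n$; contiguity of $\mathbb P_n$ to $\mathbb Q_n$ then transfers a suitable tightness statement back to $\mathbb P_n$. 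This uniform $L^1$-bound on the variation measure, together with the admissibility lower bound $x + V^{\gamma^{(n)}}_t \ge 0$, is what yields tightness of the pair $(\gamma^{(n)}, V^{\gamma^{(n)}})$ in the Meyer--Zheng (convergence-in-measure) topology.

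Having secured tightness, I would pass to a subsequence along which $(S^{(n)}, X^{(n)}, \gamma^{(n)}, V^{\gamma^{(n)}})$ converges in law in the product of the Skorokhod and Meyer--Zheng topologies, realize the limit on a single probability space via Skorokhod's representation theorem, and call the limit $(\tilde S, \tilde X, \tilde\gamma, \tilde V)$. The crucial role of the extended weak convergence $X^{(n)} \Rrightarrow X$ in Assumption \ref{asm2.3}(iii) is to guarantee that the limiting strategy $\tilde\gamma$ is adapted to the filtration generated by $\tilde X$ (equivalently by $\tilde S$): this is the point page 75 of \cite{A} warns about, namely that plain weak convergence of the prices does not control the information flow, whereas extended weak convergence of prediction processes does. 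I would verify that $\tilde\gamma$ is of bounded variation with the right terminal condition $\tilde\gamma_T = 0$ (the latter using $\gamma^{(n)}_T = 0$ and boundary behavior in the Meyer--Zheng limit), that $\tilde V = V^{\tilde\gamma}$ identified via \eqref{2.0} with $S$ replaced by $\tilde S$ (this requires a continuous-mapping / lower-semicontinuity argument for the Riemann--Stieltjes integrals $\int_0^\cdot S^{(n)}_u\, d\gamma^{(n)}_u$ and $\int_0^\cdot S^{(n)}_u\, |d\gamma^{(n)}_u|$ under Meyer--Zheng convergence, exploiting continuity of $\tilde S$), and that the admissibility $x + V^{\tilde\gamma}_t \ge 0$ passes to the limit. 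Then by Assumption \ref{asm2.2+}(ii) (uniform integrability of $U^+$), Fatou in the form of uniform integrability, and continuity and concavity of $U$, I get
\begin{equation*}
\limsup_{n\to\infty} u_n(x) \;\le\; \mathbb E\big[U(x + V^{\tilde\gamma}_T, \tilde S)\big] \;\le\; u(x),
\end{equation*}
where the last step uses that $\tilde\gamma \in \mathcal A(x)$ for the limit model (whose law is that of the original $(S,X)$ by Assumption \ref{asm2.3}(ii)) and that $u(x)$ is defined as a supremum.

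For the reverse inequality $\liminf_{n\to\infty} u_n(x) \ge u(x)$, I would fix $\delta > 0$ and a strategy $\gamma \in \mathcal A(x)$ that is $\delta$-optimal for $u(x)$, first approximating it by a ``nice'' strategy --- finitely many trading times, piecewise constant, strictly feasible with $x + V^\gamma_t \ge \eta > 0$ --- using Assumption \ref{asm2.1}(ii) to control the utility loss from shrinking the wealth by a factor $(1-\lambda)$. Such a simple strategy is a continuous functional of $S$ on $C[0,T]$ except on a null set, so one can construct strategies $\gamma^{(n)} \in \mathcal A^{(n)}(x)$ (for $n$ large, using the strict feasibility margin and the weak convergence $S^{(n)} \Rightarrow S$) whose wealth processes converge in law to that of $\gamma$; then Assumption \ref{asm2.2+}(i) handles $U^-$ and continuity of $U$ gives $\liminf_n u_n(x) \ge \mathbb E_{\mathbb P_n}[U(x+V^{\gamma^{(n)}}_T, S^{(n)})] \to \mathbb E_{\mathbb P}[U(x+V^\gamma_T, S)] \ge u(x) - O(\delta)$. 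I expect the main obstacle to be the $\limsup$ direction --- specifically, combining the Meyer--Zheng tightness of the strategies with the extended weak convergence so as to simultaneously (a) identify the limit wealth process through the nonlinear, path-dependent functionals in \eqref{2.0} and (b) certify that the limit strategy is adapted to the correct (limit) filtration; the interplay between these two topologies and the lack of any uniform variation bound in the Skorokhod sense is the delicate heart of the argument.
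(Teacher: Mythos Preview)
Your overall architecture---lower bound by approximation with simple strategies, upper bound by tightness of near-optimal strategies in the Meyer--Zheng topology followed by passage to the limit---matches the paper's, and your use of Assumption~\ref{asm2.2} to get the uniform $\mathbb Q_n$-bound $\mathbb E_{\mathbb Q_n}\big[\int_0^T S^{(n)}_u|d\gamma^{(n)}_u|\big]\le x/\varepsilon$ and then contiguity is exactly right. The identification of the limit wealth via lower-semicontinuity of $\int S\,|d\gamma|$ (so that one only gets $V^{\tilde\gamma}_t\ge\limsup_n V^{\gamma^{(n)}}_t$, which is enough) is also what the paper does.

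The genuine gap is your claim that extended weak convergence ``guarantee[s] that the limiting strategy $\tilde\gamma$ is adapted to the filtration generated by $\tilde X$.'' It does not, and the paper is explicit about this: the limit $\hat\gamma$ need not be adapted to $(\mathcal F_t)=(\mathcal F^X_t)$. What extended weak convergence actually buys (this is the paper's Lemma~\ref{lem.tight}) is a \emph{conditional independence} property: for every $t$, the $\sigma$-algebras $\mathcal F^{X,\hat\gamma}_t$ and $\mathcal F_T$ are conditionally independent given $\mathcal F_t$. This is proved by inserting the prediction-process variables $Y^{(n)}_u=\mathbb E_{\mathbb P_n}[\psi(X^{(n)})\,|\,\mathcal F^{(n)}_u]$ into the Skorokhod-representation limit and using Assumption~\ref{asm2.3}(iii) to identify their limit as $Y_u=\mathbb E_{\mathbb P}[\psi(X)\,|\,\mathcal F_u]$.

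The missing second step is then to \emph{project}: define $\hat\gamma^{\mathcal F}_t:=\mathbb E_{\mathbb P}[\hat\gamma_t\,|\,\mathcal F_t]$. Conditional independence gives $\mathbb E_{\mathbb P}[\hat\gamma^{\pm}_t\,|\,\mathcal F_t]=\mathbb E_{\mathbb P}[\hat\gamma^{\pm}_t\,|\,\mathcal F_T]$, which is what makes $\hat\gamma^{\mathcal F}$ a right-continuous process of bounded variation (a priori a family of conditional expectations indexed by $t$ has no reason to be). Because $S$ is $\mathcal F_T$-measurable, the transaction-cost terms satisfy $\int_0^t S_u\,|d\hat\gamma^{\mathcal F}_u|\le \mathbb E_{\mathbb P}\big[\int_0^t S_u\,|d\hat\gamma_u|\,\big|\,\mathcal F_T\big]$ and hence $V^{\hat\gamma^{\mathcal F}}_t\ge \mathbb E_{\mathbb P}[V^{\hat\gamma}_t\,|\,\mathcal F_T]$; this yields both admissibility $\hat\gamma^{\mathcal F}\in\mathcal A(x)$ and, by Jensen and concavity of $U(\cdot,S)$, $\mathbb E_{\mathbb P}[U(x+V^{\hat\gamma^{\mathcal F}}_T,S)]\ge \mathbb E_{\mathbb P}[U(x+V^{\hat\gamma}_T,S)]\ge\lim_n u_n(x)$. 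Without this projection-plus-Jensen argument you cannot place the limit strategy inside $\mathcal A(x)$, and the inequality $\limsup_n u_n(x)\le u(x)$ does not follow.
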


A natural question is whether we have some kind of convergence
for the optimal trading strategies (optimal controls) as well.
In order to formulate our limit theorem for the
optimal controls we need some preparations.

Any function $f\in \mathbb D[0,T]:=\mathbb D([0,T];\mathbb R)$ can be extended to a function
$f:\mathbb R_{+}\rightarrow \mathbb R$ by $f(t):=f(T)$ for all $t\geq T$.
The Meyer--Zheng topology, introduced in \cite{MZ}, is a relative topology, on the
image measures on graphs
$(t,f(t))$
of trajectories $t\rightarrow f(t)$, $t\in\mathbb R_{+}$ under the measure
$\lambda(dt):=e^{-t} dt$
(called pseudo-paths), induced by the weak topology on probability laws on
the compactified space
$[0,\infty]\times \overline{\mathbb R}$.
From Lemma 1 in \cite{MZ} it follows that the Meyer--Zheng topology on the space $\mathbb D[0,T]$
is given by the
metric
$$d_{MZ}(f,g):=\int_{0}^T \min(1,|f(t)-g(t)|)dt+|f(T)-g(T)|, \ \ f,g\in \mathbb D[0,T].$$
We denote the corresponding space by $\mathbb D_{MZ}[0,T]$.
\begin{rem}
In Lemma 1 in \cite{MZ} the authors proved that the Meyer--Zheng topology on the space
$\mathbb D[0,\infty)$ is equivalent to convergence in measure. Since in our setup the functions are constants on the time interval
$[T,\infty)$ then convergence in measure is given by the above $d_{MZ}$ metric.
\end{rem}
Now we are ready to formulate our second limit theorem.
\begin{thm}\label{thm2.2} Assume that Assumptions \ref{asm2.1}-\ref{asm2.3} hold.
 Let $x>0$ and $\hat\gamma^{(n)}\in\mathcal{A}^{(n)}(x)$, $n\in\mathbb N$ be a sequence of asymptotically optimal portfolios, namely
\begin{equation}\label{optimal}
\lim_{n\rightarrow\infty} \left(u_n(x)-\mathbb E_{\mathbb P_n}\left[U\left(x+V^{\hat\gamma^{(n)}}_T,S^{(n)}\right)\right]\right)=0.
\end{equation}
Then the sequence (of laws)
$(S^{(n)},X^{(n)},\hat\gamma^{(n)})$,
 $n\in\mathbb N$ is tight on the space
 $\mathbb D([0,T];\mathbb R^{1+m})\times \mathbb D_{MZ}[0,T]$, and thanks to Prohorov's theorem (see \cite{B}), it is relatively compact.
 Moreover,
any cluster point of the sequence $(S^{(n)},X^{(n)},\hat\gamma^{(n)})$, $n\in\mathbb N$
(there is at least one)
is of the form $(S,X,\hat\gamma)$.
Define
\begin{equation}\label{portfolio}
\hat\gamma^{\mathcal F}_t:=\mathbb E_{\mathbb P}\left(\hat{\gamma}_t|\mathcal F_t\right), \ \ t\in [0,T]
\end{equation}
where, as before $(\mathcal F_t)_{0\leq t\leq T}$
is the usual filtration generated by $X$ and (with abuse of notations)
$\mathbb E_{\mathbb P}$  denotes the expectation on the corresponding probability space.
 Then, $\hat\gamma^{\mathcal F}=(\hat\gamma^{\mathcal F}_t)_{0\leq t\leq T}$ is an optimal portfolio, i.e.  $\hat\gamma^{\mathcal F}\in\mathcal A(x)$ and
 $$u(x)=\mathbb E_{\mathbb P}\left[U\left(x+V^{\hat\gamma^{\mathcal F}}_T,S\right)\right].$$
 \end{thm}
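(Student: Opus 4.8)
The plan is to construct the optimal limiting strategy in three moves: turn the admissibility constraint into quantitative bounds that give tightness of the strategies in the Meyer--Zheng topology, extract a subsequential limit and check it is admissible, and then project that limit onto the filtration generated by $X$ and verify the projection is optimal. The two notions of convergence serve different purposes: the Meyer--Zheng topology is coarse enough that the admissibility bound already supplies the needed compactness, whereas extended weak convergence, $X^{(n)}\Rrightarrow X$, is what guarantees that the $(\mathcal F_t)_{0\le t\le T}$--projection of the limit strategy is a genuine adapted strategy rather than one that anticipates the future of $X$.

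\emph{Step 1 (a priori estimates and tightness).} For arbitrary $\gamma^{(n)}\in\mathcal A^{(n)}(x)$, with $M^{(n)}$ as in Assumption~\ref{asm2.2}, integration by parts shows that $N^{(n)}_t:=x+\gamma^{(n)}_tM^{(n)}_t-\int_0^tM^{(n)}_u\,d\gamma^{(n)}_u=x+\int_0^t\gamma^{(n)}_{u-}\,dM^{(n)}_u$ is a $\mathbb Q_n$--local martingale, while $|M^{(n)}-S^{(n)}|\le(\kappa-\varepsilon)S^{(n)}$ and admissibility give $N^{(n)}_t\ge x+V^{\gamma^{(n)}}_t+\varepsilon\bigl(|\gamma^{(n)}_t|S^{(n)}_t+\int_0^tS^{(n)}_u|d\gamma^{(n)}_u|\bigr)\ge0$; hence $N^{(n)}$ is a nonnegative $\mathbb Q_n$--supermartingale. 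Evaluating at $t=T$ (where $\gamma^{(n)}_T=0$) yields, uniformly in $n$, $\mathbb E_{\mathbb Q_n}\!\bigl[\int_0^TS^{(n)}_u|d\gamma^{(n)}_u|\bigr]\le x/\varepsilon$. Applying this to $\hat\gamma^{(n)}$: since $S^{(n)}\Rightarrow S$ with $S$ continuous and strictly positive, for every $\eta>0$ there is $\delta>0$ with $\sup_n\mathbb P_n(\inf_{[0,T]}S^{(n)}<\delta)<\eta$; combined with Markov's inequality and the contiguity in Assumption~\ref{asm2.2}(2), the total variation $\mathrm{Var}_0^T(\hat\gamma^{(n)})$, the supremum $\sup_t|\hat\gamma^{(n)}_t|$, and the variation of the bank account $B^{(n)}_t:=x-(1+\kappa)\int_0^tS^{(n)}_u\,d(\hat\gamma^{(n)})^{+}_u+(1-\kappa)\int_0^tS^{(n)}_u\,d(\hat\gamma^{(n)})^{-}_u$ are all tight under $\mathbb P_n$. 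Uniformly bounded, uniformly bounded--variation families are relatively compact in $\mathbb D_{MZ}[0,T]$ (Helly's theorem together with Lemma~1 of \cite{MZ}), so, combined with Assumption~\ref{asm2.3}(ii), the laws of $(S^{(n)},X^{(n)},\hat\gamma^{(n)},B^{(n)})$ are tight; by Assumption~\ref{asm2.3}(ii) again the $(S,X)$--marginal of every cluster point (there is at least one, by Prohorov) is the prescribed one, so cluster points are of the form $(S,X,\hat\gamma,B)$, and in particular $(S^{(n)},X^{(n)},\hat\gamma^{(n)})$ is tight.

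\emph{Step 2 (admissibility of the limit).} By Prohorov and Skorokhod we may realise a convergent subsequence on one space with $\hat\gamma^{(n)}\to\hat\gamma$ and $B^{(n)}\to B$ in the Meyer--Zheng metric and $S^{(n)}\to S$ uniformly, almost surely. The limit $\hat\gamma$ is of finite variation, has $\hat\gamma_T=0$, and (by the Meyer--Zheng conditional--variation bound) is RCLL. Passing to the limit in $x+V^{\hat\gamma^{(n)}}_t=B^{(n)}_t+(1-\kappa)S^{(n)}_t(\hat\gamma^{(n)}_t)^{+}-(1+\kappa)S^{(n)}_t(\hat\gamma^{(n)}_t)^{-}$ (valid because a product of a uniformly convergent and a measure--convergent sequence converges in measure) gives $B_t+(1-\kappa)S_t\hat\gamma^{+}_t-(1+\kappa)S_t\hat\gamma^{-}_t\ge0$ for Lebesgue--a.e.\ $t$, hence for all $t$ by right continuity; moreover $V^{\hat\gamma^{(n)}}_T=B^{(n)}_T-x\to B_T-x=:\mathcal V$. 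Finally, any oscillation of $\hat\gamma^{(n)}$ lost in the Meyer--Zheng limit is confined to time sets of vanishing Lebesgue measure and therefore contributes negligibly to the net position change but nonnegatively to the realised cost; hence $B$ is dominated by the genuine bank account of $\hat\gamma$, i.e.\ $B_t\le x-(1+\kappa)\int_0^tS_u\,d\hat\gamma^{+}_u+(1-\kappa)\int_0^tS_u\,d\hat\gamma^{-}_u$ for all $t$, so that $x+V^{\hat\gamma}_t\ge B_t+(1-\kappa)S_t\hat\gamma^{+}_t-(1+\kappa)S_t\hat\gamma^{-}_t\ge0$ for all $t$ and $\mathcal V\le V^{\hat\gamma}_T$; i.e.\ $\hat\gamma$ is admissible on the enlarged space.

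\emph{Step 3 (projection and optimality).} Since $X^{(n)}\Rrightarrow X$, one passes to the limit in the conditional expectations defining $\hat\gamma^{(n)}$ (approximating the coordinate maps of $\hat\gamma$ by bounded continuous functionals of the path and invoking Assumption~\ref{asm2.3}(iii)), which shows that $\hat\gamma^{\mathcal F}_t=\mathbb E_{\mathbb P}(\hat\gamma_t|\mathcal F_t)$ is $(\mathcal F_t)$--adapted and inherits the RCLL, finite--variation and $\hat\gamma^{\mathcal F}_T=0$ properties. As $S$ is $(\mathcal F_t)$--adapted and $\gamma\mapsto V^{\gamma}_t$ is concave (the term $\gamma_tS_t-\int_0^tS_u\,d\gamma_u$ is linear and $-\kappa|\gamma_t|S_t-\kappa\int_0^tS_u|d\gamma_u|$ is concave), a conditional Jensen inequality --- applied to Riemann--Stieltjes sums over partitions avoiding the exceptional times and then passed to the limit --- yields $V^{\hat\gamma^{\mathcal F}}_t\ge\mathbb E_{\mathbb P}(V^{\hat\gamma}_t|\mathcal F_t)$ for all $t$, whence $x+V^{\hat\gamma^{\mathcal F}}_t\ge\mathbb E_{\mathbb P}(x+V^{\hat\gamma}_t|\mathcal F_t)\ge0$ and $\hat\gamma^{\mathcal F}\in\mathcal A(x)$. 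Now chain the inequalities: by (\ref{optimal}) and Theorem~\ref{thm2.1}, $u(x)=\lim_n\mathbb E_{\mathbb P_n}[U(x+V^{\hat\gamma^{(n)}}_T,S^{(n)})]$; by continuity and monotonicity of $U$ in its first argument $\limsup_nU(x+V^{\hat\gamma^{(n)}}_T,S^{(n)})\le U(x+\mathcal V,S)$ a.s., so the uniform integrability of Assumption~\ref{asm2.2+}(ii) and Fatou's lemma give $u(x)\le\mathbb E_{\mathbb P}[U(x+\mathcal V,S)]\le\mathbb E_{\mathbb P}[U(x+V^{\hat\gamma}_T,S)]$ (using $\mathcal V\le V^{\hat\gamma}_T$ and monotonicity); and concavity of $U(\cdot,S)$ together with $V^{\hat\gamma^{\mathcal F}}_T\ge\mathbb E_{\mathbb P}(V^{\hat\gamma}_T|\mathcal F_T)$ gives $\mathbb E_{\mathbb P}[U(x+V^{\hat\gamma}_T,S)]\le\mathbb E_{\mathbb P}[U(x+V^{\hat\gamma^{\mathcal F}}_T,S)]\le u(x)$, the last step because $\hat\gamma^{\mathcal F}\in\mathcal A(x)$. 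Hence all inequalities are equalities and $\hat\gamma^{\mathcal F}$ is optimal. The main obstacle is the interaction of Steps~2 and~3: $\gamma\mapsto V^{\gamma}_T$ is \emph{not} Meyer--Zheng continuous, and the proof hinges on the lower semicontinuity of the cost functionals $\gamma\mapsto\int_0^{\cdot}S_u|d\gamma_u|$, on the fact that Meyer--Zheng convergence localises any spurious oscillation to time sets of zero Lebesgue measure (so that the signed integral $\int_0^TS^{(n)}_u\,d\hat\gamma^{(n)}_u$ is controlled, up to the cost penalty, by the weighted total variation), and on extended weak convergence, which is precisely what makes the $(\mathcal F_t)$--projection a legitimate non--anticipating strategy; controlling that signed integral term in the limit is the single most delicate estimate.
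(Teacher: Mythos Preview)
Your Steps 1 and 2 are essentially the paper's Lemma~\ref{lem.tight} (Step 1) and the first step of the completion proof, with the cosmetic difference that you carry the bank account $B^{(n)}$ rather than the two integrals separately; that works. The genuine gap is in Step 3.

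You assert that conditional Jensen on Riemann--Stieltjes sums gives $V^{\hat\gamma^{\mathcal F}}_t\ge\mathbb E_{\mathbb P}(V^{\hat\gamma}_t\,|\,\mathcal F_t)$. Write out one such sum: $\int_0^tS_u\,d\hat\gamma_u\approx\sum_iS_{t_i}(\hat\gamma_{t_{i+1}}-\hat\gamma_{t_i})$. Its conditional expectation given $\mathcal F_t$ is $\sum_iS_{t_i}\,\mathbb E_{\mathbb P}(\hat\gamma_{t_{i+1}}-\hat\gamma_{t_i}\,|\,\mathcal F_t)$, whereas the Riemann--Stieltjes sum for $\hat\gamma^{\mathcal F}$ is $\sum_iS_{t_i}\bigl(\mathbb E_{\mathbb P}(\hat\gamma_{t_{i+1}}\,|\,\mathcal F_{t_{i+1}})-\mathbb E_{\mathbb P}(\hat\gamma_{t_i}\,|\,\mathcal F_{t_i})\bigr)$. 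These differ because the conditioning $\sigma$-algebra changes with the partition point; there is no Jensen inequality bridging them. The same issue obstructs your claim that $\hat\gamma^{\mathcal F}$ ``inherits the finite--variation property'': $t\mapsto\mathbb E_{\mathbb P}(\hat\gamma^{\pm}_t\,|\,\mathcal F_t)$ need not be monotone when the conditioning filtration grows with $t$.

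What the paper does, and what you are missing, is an intermediate structural result (its Lemma~\ref{lem.tight}, Step 2): for every cluster point $(S,X,\hat\gamma)$, the $\sigma$-algebras $\mathcal F^{X,\hat\gamma}_t$ and $\mathcal F_T$ are \emph{conditionally independent given} $\mathcal F_t$. This is precisely where extended weak convergence enters, via the test processes $Y^{(n)}_u=\mathbb E_{\mathbb P_n}(\psi(X^{(n)})\,|\,\mathcal F^{(n)}_u)$. The payoff is the identity $\mathbb E_{\mathbb P}(\hat\gamma^{\pm}_t\,|\,\mathcal F_t)=\mathbb E_{\mathbb P}(\hat\gamma^{\pm}_t\,|\,\mathcal F_T)$ for every $t$: the projection can be computed against the \emph{fixed} $\sigma$-algebra $\mathcal F_T$. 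Now $\hat\beta^{\pm}_t:=\mathbb E_{\mathbb P}(\hat\gamma^{\pm}_t\,|\,\mathcal F_T)$ are genuinely nondecreasing, $\hat\gamma^{\mathcal F}=\hat\beta^{+}-\hat\beta^{-}$ has bounded variation, and since $S$ is $\mathcal F_T$-measurable one gets $\int_0^tS_u\,d\hat\gamma^{\mathcal F}_u=\mathbb E_{\mathbb P}\bigl(\int_0^tS_u\,d\hat\gamma_u\,\big|\,\mathcal F_T\bigr)$ and $\int_0^tS_u\,|d\hat\gamma^{\mathcal F}_u|\le\mathbb E_{\mathbb P}\bigl(\int_0^tS_u\,|d\hat\gamma_u|\,\big|\,\mathcal F_T\bigr)$, from which $V^{\hat\gamma^{\mathcal F}}_t\ge\mathbb E_{\mathbb P}(V^{\hat\gamma}_t\,|\,\mathcal F_T)$ follows cleanly. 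Your sentence ``extended weak convergence \dots\ makes the $(\mathcal F_t)$--projection a legitimate non--anticipating strategy'' is pointing at the right phenomenon but does not supply it; the conditional independence lemma is the missing idea.
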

 We end this section with the following remark about Theorem \ref{thm2.2}.
\begin{rem}
 In view of Assumption \ref{asm2.3}(ii) any cluster point of the
 sequence $(S^{(n)},X^{(n)},\hat\gamma^{(n)})$,
 $n\in\mathbb N$
 has to be of the form $(S,X,\hat\gamma)$.
We can show
that for a such cluster point $(S,X,\hat\gamma)$, $\hat\gamma$ is an optimal portfolio.
However, $\hat\gamma$ is not necessarily adapted to the filtration
generated by $X$.

One possible way to treat this issue is to follow the weak formulation setup.
Roughly speaking, the weak formulation allows the investor to randomize from the
start, and so the filtration is rich enough in the sense that the law of any
cluster point $(S,X,\hat\gamma)$ can be represented with an adapted process  $\hat\gamma$.
On the other hand, since the enlarged
filtration does not provide any additional (in
comparison with the original filtration) information about the future, the value of
the (concave) utility maximization problem remains the same as in the original setup.
For more details see \cite{CR:17}.

In this paper we do not consider the weak formulation approach, instead we solve the measurability issue by projecting
the limit portfolio $\hat\gamma$ on the investor's filtration
$(\mathcal F_t)_{0\leq t\leq T}$.
We will prove that the projection is well defined and gives an optimal trading strategy.

Let us remark that if we had uniqueness results for the optimal trading strategy then we could prove that
the sequence $(S^{(n)},X^{(n)},\hat\gamma^{(n)})$, $n\in\mathbb N$ converges to
$(S,X,\hat\gamma)$ where $\hat\gamma$ is the unique optimal control and in particular it is adapted to the filtration generated by $X$. Surprisingly, up to date, there are no
results related to the uniqueness of the optimal trading strategy.  Of course, for strictly concave utility we can prove the uniqueness of the optimal terminal wealth but
this does not
imply the uniqueness of the optimal trading strategy (see Remark 6.9 in \cite{S1}). The latter is an interesting question which is left for future research.
\end{rem}
\section{Proof of the Main Results}\label{sec:3}
\subsection{Three Crucial Lemmata}
We start with the following result. Recall that in view of Assumption \ref{asm2.1}(i) $u(x)>-\infty$.
\begin{lem}\label{lem3.1}
The function $u:(0,\infty)\rightarrow \mathbb R\cup \{\infty\}$ is continuous.
Namely, for any $x>0$ we have
$u(x)=\lim_{y\rightarrow x} u(y)$
where a priori the joint value can be equal to $\infty$.
\end{lem}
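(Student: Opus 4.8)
The goal is to show that $u$ is continuous on $(0,\infty)$. Since $u$ is the supremum over $\gamma \in \mathcal A(x)$ of $\mathbb E_{\mathbb P}[U(x+V^\gamma_T,S)]$ and the admissible set $\mathcal A(x)$ is monotone in $x$ (if $\gamma \in \mathcal A(x)$ then $\gamma \in \mathcal A(y)$ for $y \ge x$, since $y + V^\gamma_t \ge x + V^\gamma_t \ge 0$), the map $u$ is non-decreasing. A non-decreasing function on an interval has one-sided limits everywhere, so it suffices to establish both left- and right-continuity at an arbitrary $x > 0$, i.e. $\lim_{y\uparrow x} u(y) = u(x) = \lim_{y\downarrow x} u(y)$.

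\emph{Left-continuity.} Fix $x>0$ and $\lambda \in (0,1)$; set $y = (1-\lambda)x$. I would like to compare admissible strategies for initial capital $y$ with those for $x$. The natural idea is scaling: if $\gamma \in \mathcal A(x)$, then $(1-\lambda)\gamma \in \mathcal A(y)$, because $V^{(1-\lambda)\gamma} = (1-\lambda)V^\gamma$ by the linearity of \eqref{2.0} in $\gamma$, so $y + V^{(1-\lambda)\gamma}_t = (1-\lambda)(x + V^\gamma_t) \ge 0$. Moreover $(1-\lambda)x + V^{(1-\lambda)\gamma}_T = (1-\lambda)(x + V^\gamma_T)$. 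Now apply Assumption~\ref{asm2.1}(ii) pointwise with $v = x + V^\gamma_T$ (which is $\ge 0$, and where $U$ has been extended to $\mathbb R_+$):
\begin{equation*}
U\bigl((1-\lambda)(x+V^\gamma_T),S\bigr) \ge (1-m_1(\lambda))\,U(x+V^\gamma_T,S) - m_2(\lambda)\,\zeta.
\end{equation*}
Taking $\mathbb E_{\mathbb P}$ and then the supremum over $\gamma \in \mathcal A(x)$ — using $\zeta \in L^1$ and Assumption~\ref{asm2.1}(i) to keep everything well-defined (and noting $U(x,S)$ provides an integrable lower bound so that the positive and negative parts are controlled; one must be a little careful about whether $u(x) = +\infty$, in which case the inequality still yields $u(y) = +\infty$) — gives
\begin{equation*}
u((1-\lambda)x) \ge (1-m_1(\lambda))\,u(x) - m_2(\lambda)\,\mathbb E_{\mathbb P}[\zeta]
\end{equation*}
when $u(x)<\infty$ (if $u(x)\ge 0$; if $u(x)<0$ replace $(1-m_1(\lambda))u(x)$ by $u(x)-m_1(\lambda)|u(x)|$, a harmless modification). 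Letting $\lambda \downarrow 0$, so that $m_1(\lambda),m_2(\lambda)\to 0$, yields $\liminf_{\lambda\downarrow 0} u((1-\lambda)x) \ge u(x)$; combined with monotonicity ($u((1-\lambda)x)\le u(x)$) this gives $\lim_{y\uparrow x} u(y) = u(x)$. The case $u(x)=+\infty$ is immediate from the displayed inequality.

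\emph{Right-continuity.} Now fix $x>0$ and take $y = x/(1-\lambda) > x$. Reversing the roles: if $\gamma \in \mathcal A(y)$ then $(1-\lambda)\gamma \in \mathcal A(x)$ (same scaling computation, since $x + V^{(1-\lambda)\gamma}_t = (1-\lambda)(y + V^\gamma_t)\ge 0$), and $x + V^{(1-\lambda)\gamma}_T = (1-\lambda)(y+V^\gamma_T)$. Applying Assumption~\ref{asm2.1}(ii) with $v = y + V^\gamma_T$ and taking expectations, then supremum over $\gamma\in\mathcal A(y)$, gives
\begin{equation*}
u(x) \ge (1-m_1(\lambda))\,u(y) - m_2(\lambda)\,\mathbb E_{\mathbb P}[\zeta]
\end{equation*}
(with the same sign caveat for $u(y)<0$). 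Hence, when $u(x)<\infty$, we get $\limsup_{\lambda\downarrow 0} u(x/(1-\lambda)) \le u(x)$, and with monotonicity ($u(y)\ge u(x)$) this yields $\lim_{y\downarrow x} u(y) = u(x)$. If $u(x)=+\infty$ then by monotonicity $u(y)=+\infty$ for $y>x$ and right-continuity is trivial; the displayed inequality also shows $u$ cannot jump to $+\infty$ from a finite value as $y\downarrow x$, so the statement "a priori the joint value can be $\infty$" is consistent.

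\emph{Main obstacle.} The only delicate point is the interchange of supremum and the pointwise inequality when $U$ is not bounded and $u(x)$ may equal $+\infty$: one must justify that $\mathbb E_{\mathbb P}[U((1-\lambda)(x+V^\gamma_T),S)]$ is well-defined (not $-\infty+\infty$) and that taking suprema preserves the inequality. This is handled by Assumption~\ref{asm2.1}(i) — which gives $\mathbb E_{\mathbb P}[U(x,S)]>-\infty$ and hence, by monotonicity of $U(\cdot,s)$ and the convention for expectations, an integrable lower bound $U(x,S)\wedge 0$ on all the relevant negative parts (for $v \ge x$; for $0\le v<x$ one uses $U(0,S)\wedge 0 \ge U(x,S) - C$-type bounds more carefully, or simply notes the negative part is dominated using concavity) — together with $\zeta\in L^1(\mathbb P)$. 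Once the expectations are legitimate, taking suprema is routine. I expect the write-up to be short, essentially the two scaling arguments above plus a sentence handling the $u(x)=+\infty$ branch.
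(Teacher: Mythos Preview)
Your proposal is correct and follows exactly the approach the paper has in mind: the paper's proof simply cites Lemma~2.1 of \cite{BDG} and notes that the only property needed is the scaling identity $V^{\lambda\gamma}_t=\lambda V^{\gamma}_t$, which is precisely the linearity you exploit together with Assumption~\ref{asm2.1}(ii). Your write-up fills in the details that the paper omits by reference; there is nothing different in substance.
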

\begin{proof}
The proof is done by using similar arguments as in Lemma 2.1 in \cite{BDG} for the frictionless case. The only needed property is that
for any $\lambda>0$ and a trading strategy $\gamma$ we have the equality
$V^{\lambda\gamma}_t=\lambda V^{\gamma}_t$, $t\in [0,T]$.
Trivially, this property holds true in our setup.
\end{proof}
Now, we prove the lower bound part in Theorem \ref{thm2.1}.
\begin{lem}\label{lem3.2}
For any $x>0$ we have $$u(x)\leq\lim\inf_{n\rightarrow \infty} u_n(x).$$
\end{lem}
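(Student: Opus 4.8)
The plan is to take an arbitrary admissible trading strategy $\gamma \in \mathcal A(x)$ for the limit market and construct, for each $n$, an admissible strategy $\gamma^{(n)} \in \mathcal A^{(n)}(x')$ (for a slightly larger initial capital $x'$, or after a small shrinkage of $\gamma$) whose terminal wealth converges in distribution, jointly with $(S^{(n)}, X^{(n)})$, to $(x + V^\gamma_T, S)$. Since $\gamma$ is of bounded variation with RCLL paths and $\gamma_T = 0$, the natural move is first to approximate $\gamma$ by a \emph{simple} strategy: a finite linear combination of jumps at fixed deterministic times $0 = t_0 < t_1 < \dots < t_N = T$, with $\mathcal F_{t_i}$-measurable amounts. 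More precisely, I would fix $\delta > 0$ and choose a simple strategy $\gamma^\delta$ of the form $\gamma^\delta_t = \sum_{i} \xi_i \mathbf 1_{[t_i, T)}(t)$ with $\xi_i$ a bounded continuous function of $(X_{s})_{s \le t_i}$ (bounded continuous functionals are dense, and one can truncate to stay admissible at the cost of a vanishing error as $\delta \to 0$), such that $x + V^{\gamma^\delta}_T \ge (1-\varepsilon_\delta)(x + V^\gamma_T)$ for a small $\varepsilon_\delta \downarrow 0$, using the continuity of $S$ and the structure \eqref{2.0} of the wealth process. Assumption \ref{asm2.1}(ii) then controls the loss in expected utility: $\mathbb E_{\mathbb P}[U((1-\varepsilon_\delta)(x+V^\gamma_T), S)] \ge (1 - m_1(\varepsilon_\delta)) \mathbb E_{\mathbb P}[U(x + V^\gamma_T, S)] - m_2(\varepsilon_\delta)\mathbb E_{\mathbb P}[\zeta]$, which tends to $\mathbb E_{\mathbb P}[U(x + V^\gamma_T, S)]$ as $\delta \to 0$.

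Next, for the simple strategy $\gamma^\delta$ with amounts $\xi_i = g_i((X_s)_{s \le t_i})$, define the approximating strategies in the $n$-th market by the \emph{same functionals}: $\gamma^{(n),\delta}_t := \sum_i g_i^{(n)}((X^{(n)}_s)_{s \le t_i}) \mathbf 1_{[t_i,T)}(t)$, where $g_i^{(n)}$ is a suitable continuous modification of $g_i$ made adapted to $\mathcal F^{(n)}_{t_i}$ — here is exactly where extended weak convergence (Assumption \ref{asm2.3}(iii)) is used: it guarantees that the conditional expectations, and hence the "information available at time $t_i$", converge in the right sense, so that one can realize $\mathcal F^{(n)}_{t_i}$-measurable random variables converging jointly in law to $\xi_i$. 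By the continuous mapping theorem applied to the (a.s.\ continuous, because $S$ is continuous and the jump times are fixed) map $(s, \text{controls}) \mapsto$ terminal wealth, we get $(S^{(n)}, X^{(n)}, x + V^{\gamma^{(n),\delta}}_T) \Rightarrow (S, X, x + V^{\gamma^\delta}_T)$. A small extra cushion in the initial capital (replacing $x$ by $x + \eta$ and then letting $\eta \downarrow 0$ via Lemma \ref{lem3.1}, the continuity of $u_n$ and $u$) ensures $\gamma^{(n),\delta} \in \mathcal A^{(n)}(x + \eta)$ for $n$ large, since the admissibility constraint $x + V^{\gamma^{(n),\delta}}_t \ge 0$ is an a.s.\ inequality that survives the passage to the limit with strict slack.

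Finally I would pass to expectations. By Assumption \ref{asm2.2+}(i) the negative parts $U^-(x, S^{(n)})$ are uniformly integrable, and combined with the concavity/monotonicity bound this controls $U^-(x + V^{\gamma^{(n),\delta}}_T, S^{(n)})$ from below; together with the weak convergence above and Fatou's lemma (in the form for uniformly integrable negative parts, or Skorokhod representation plus Fatou), one obtains $\liminf_n \mathbb E_{\mathbb P_n}[U(x + \eta + V^{\gamma^{(n),\delta}}_T, S^{(n)})] \ge \mathbb E_{\mathbb P}[U(x + \eta + V^{\gamma^\delta}_T, S)]$. Chaining the inequalities $u_n(x+\eta) \ge \mathbb E_{\mathbb P_n}[U(\cdot)]$ and then sending first $n \to \infty$, then $\delta \to 0$, then $\eta \to 0$ (using continuity of $u$), and taking the supremum over $\gamma \in \mathcal A(x)$ yields $u(x) \le \liminf_n u_n(x)$. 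The main obstacle is the adaptedness issue in the second step: transferring an $\mathcal F_{t_i} = \sigma(X_s : s \le t_i)$-measurable functional into an $\mathcal F^{(n)}_{t_i}$-measurable random variable that converges jointly in distribution to the original. This is precisely the reason extended weak convergence — rather than mere weak convergence $X^{(n)} \Rightarrow X$ — is needed, and making this transfer rigorous (e.g.\ via Aldous's prediction-process characterization, or by directly invoking the convergence of conditional expectations of continuous functionals in Assumption \ref{asm2.3}(iii)) is the technical heart of the argument. A secondary subtlety is ensuring the terminal-wealth map is continuous at the relevant limit points despite the Skorokhod topology on the $S^{(n)}$ and the Meyer--Zheng-type behaviour of $\gamma^{(n),\delta}$; restricting to simple strategies with fixed jump times and using continuity of the limit $S$ circumvents this.
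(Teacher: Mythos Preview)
Your overall architecture---approximate $\gamma$ by a simple strategy with coefficients that are continuous functionals of $X$ at finitely many times, transplant those functionals to $X^{(n)}$, then pass to the limit via Fatou and Assumption~\ref{asm2.2+}(i)---is exactly what the paper does. Two points, however, need correction.

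\textbf{Extended weak convergence is not used here.} You identify the ``technical heart'' as the adaptedness transfer and claim this is where Assumption~\ref{asm2.3}(iii) enters. It does not. Once your simple strategy has coefficients $\xi_i=\phi_i(X_{a_{i,1}},\dots,X_{a_{i,m_i}})$ with $a_{i,j}\le t_i$ and $\phi_i$ continuous and bounded, you simply set $\xi^{(n)}_i:=\phi_i(X^{(n)}_{a_{i,1}},\dots,X^{(n)}_{a_{i,m_i}})$; this is $\mathcal F^{(n)}_{t_i}$-measurable by Assumption~\ref{asm2.3}(i) (the filtration is generated by $X^{(n)}$), with no ``modification'' $g_i^{(n)}$ needed. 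Joint convergence of the wealths then follows from the ordinary weak convergence in Assumption~\ref{asm2.3}(ii) via Skorohod representation and continuity of the $\phi_i$. The paper's proof of this lemma uses only parts (i)--(ii) of Assumption~\ref{asm2.3}; the extended convergence (iii) is reserved for the upper bound, where it produces the conditional-independence property in Lemma~\ref{lem.tight}.

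\textbf{The admissibility step has a gap.} Saying a cushion $\eta$ ``ensures $\gamma^{(n),\delta}\in\mathcal A^{(n)}(x+\eta)$ for $n$ large'' is not correct: after Skorohod coupling one gets $\sup_t|V^{\gamma^{(n),\delta}}_t-V^{\gamma^\delta}_t|\to 0$ a.s., but the threshold ``$n$ large'' depends on $\omega$, so for no fixed $n$ is the strategy admissible on all of $\Omega$. The paper repairs this by stopping: it takes $\gamma\in\mathcal A(x-3\epsilon)$, defines $\tau_n:=T\wedge\inf\{t:x+V^{\gamma^{(n)}}_t<\epsilon\}$, and replaces $\gamma^{(n)}$ by $\beta^{(n)}_t:=\gamma^{(n)}_t\mathbb I_{t<\tau_n}$, which lies in $\mathcal A^{(n)}(x)$ for every $n$. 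One then shows $\mathbb I_{\tau_n=T}\to 1$ a.s., so the stopping is asymptotically harmless, and the bound $x+V^{\beta^{(n)}}_T\ge\epsilon$ makes Assumption~\ref{asm2.2+}(i) applicable in the Fatou step. Your sketch needs this (standard) truncation to close.
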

\begin{proof}
\emph{Step 1.}
Let $x>0$.
Without loss of generality (by passing to a subsequence) we assume that $\lim_{n\rightarrow \infty} u_n(x)$ exists.
In view of Lemma \ref{lem3.1} in order to prove the statement, it is sufficient to prove that for any
$\epsilon\in (0, x/3)$ and $\gamma\in\mathcal A(x-3\epsilon)$ we have
\begin{equation}\label{3.1}
\mathbb E_{\mathbb P}[U(x+V^{\gamma}_T-3\epsilon,S)]\leq \lim_{n\rightarrow\infty} u_n(x).
\end{equation}
From the Skorohod representation theorem (Theorem 3 in \cite{D}) and Assumption \ref{asm2.3}(ii), we
can redefine the stochastic processes
$(S^{(n)},X^{(n)})$, $n\in\mathbb N$ and $(S,X)$ on the same probability space such that
\begin{equation}\label{3.2}
(S^{(n)},X^{(n)})\rightarrow (S,X) \ \mbox{a.s.} \ \mbox{on} \ \ \mathbb D([0,T];\mathbb R^{m+1}).
\end{equation}
Choose $\epsilon\in (0,x/3)$ and $\gamma\in\mathcal A(x-3\epsilon)$. We aim to prove (\ref{3.1}).

\emph{Step 2.}
For any $n\in\mathbb N$, let $\Gamma^{(n)}$ be the set of all trading strategies (do not have to satisfy admissibility conditions)
in the $n$--step model.
First, we show that there exists a subsequence $\gamma^{(n)}\in \Gamma^{(n)}$, $n\in\mathbb N$
(for simplicity the subsequence is still
denoted by $n$) such that
\begin{equation}\label{3.2+}
V^{\gamma}_T\leq \lim\inf_{n\rightarrow\infty} V^{\gamma^{(n)}}_T
\end{equation}
and
\begin{equation}\label{3.2++}
 \lim\inf_{n\rightarrow\infty}\inf_{0\leq t\leq T}V^{\gamma^{(n)}}_t\geq 2\epsilon-x.
\end{equation}
To that end, for any $n\in\mathbb N$
define the predictable (with respect to $(\mathcal F_t)_{0\leq t\leq T}$) process
$\bar\gamma^{(n)}=(\bar\gamma^{(n)}_t)_{0\leq t\leq T}$ by
\begin{equation*}
\bar\gamma^{(n)}_t:=\sum_{i=1}^{n-1} \gamma_{\frac{(i-1)T}{n}}\mathbb I_{\frac{iT}{n}\leq t<\frac{(i+1)T}{n}}, \ \ t\in [0,T].
\end{equation*}
We made small shift in time in order to make $\bar\gamma^{(n)}$ predictable.
Clearly,
\begin{equation}\label{3.2+++}
\begin{split}
V^{\gamma}_T&=-\int_{0}^T S_td\gamma_t-\kappa\int_{0}^T S_t|d\gamma_t| \\
&\leq\lim\inf_{n\rightarrow\infty}\left(
-\int_{0}^T S_td\bar{\gamma}^{(n)}_t-\kappa\int_{0}^T S_t|d\bar{\gamma}^{(n)}_t|\right)
\\
&=\lim\inf_{n\rightarrow\infty}V^{\bar\gamma^{(n)}}_T.
\end{split}
\end{equation}
Fix $n\in\mathbb N$, $k=0,1,...,n-1$ and $t\in [\frac{kT}{n},\frac{(k+1)T}{n})$.
 From (\ref{2.0})
 and the simple relations
$$\int_{(i-1)T/n}^{\frac{iT}{n}} d\gamma_u= \int_{\frac{iT}{n}}^{\frac{(i+1)T}{n}}d\bar\gamma^{(n)}_u, \ \ \int_{(i-1)T/n}^{\frac{iT}{n}} |d\gamma_u|\geq \int_{\frac{iT}{n}}^{\frac{(i+1)T}{n}}|d\bar\gamma^{(n)}_u|, \   i=1,...,k,$$
it follows that
\begin{equation*}
\begin{split}
V^{\gamma}_{\frac{kT}{n}}-V^{\bar\gamma^{(n)}}_{\frac{(k+1)T}{n}}&\leq 2|\gamma_{\frac{kT}{n}}|
|S_{\frac{(k+1) T}{n}}-S_{\frac{k T}{n}}| \\
&+2\int_{0}^{\frac{kT}{n}} |d\gamma_u| \sup_{0\leq t_1,t_2\leq \frac{(k+1)T}{n},|t_2-t_1|\leq \frac{2kT}{n} } |S_{t_2}-S_{t_1}|.
\end{split}
\end{equation*}
Moreover, since $\bar\gamma^{(n)}$ is (a random) constant on the interval $[\frac{kT}{n},t]$, then from (\ref{2.0})
$$
V^{\bar\gamma^{(n)}}_{\frac{kT}{n}}-V^{\bar\gamma^{(n)}}_{t}\leq 2|\bar\gamma^{(n)}_{\frac{k T}{n}}| |S_t- S_{\frac{kT}{n}}|.
$$
Thus, for any $n\in\mathbb N$ (notice that $V^{\bar\gamma^{(n)}}_{t}=0$ for $t<T/n$)
$$\inf_{0\leq t\leq T}V^{\bar\gamma^{(n)}}_t\geq \min_{0\leq k\leq n}V^{\gamma}_{\frac{kT}{n}}-
6\int_{0}^{T} |d\gamma_u|\sup_{|t_2-t_1|\leq \frac{2kT}{n} } |S_{t_2}-S_{t_1}|.
$$
Since $\gamma=\mathcal A(x-3\epsilon)$, then we conclude
\begin{equation}\label{3.2++++}
\lim\inf_{n\rightarrow\infty}\inf_{0\leq t\leq T}V^{\bar\gamma^{(n)}}_t\geq \inf_{0\leq t\leq T}V^{\gamma}_t\geq 3\epsilon-x.
\end{equation}
Next, let $\tilde \Gamma$ be the set of all simple integrands of the from
\begin{equation}\label{3.3+}
\tilde\gamma_t=\sum_{i=1}^k \beta_i\mathbb{I}_{t_i\leq t< t_{i+1}},
\end{equation}
where
$k\in\mathbb N$, $0=t_1<t_2<....<t_{k+1}=T$
is a deterministic partition and
\begin{equation}\label{3.4}
\beta_i=\phi_i(X_{a_{i,1}},...,X_{a_{i,m_i}}), \ \ i=1,...,k,
\end{equation}
for a deterministic partition
$0=a_{i,1}<...<a_{i,m_i}=t_{i}$ and a continuous bounded function
$\phi_i:(\mathbb R^m)^{m_i}\rightarrow\mathbb R^d$.

Since the filtration $(\mathcal F_t)_{0\leq t\leq T}$ is generated by $X$ then standard density arguments imply
that any random variable $\mathcal F_{t-}$ measurable can be approximated (with respect to convergence in probability) by random variables of
the form $\phi(X_{a_1},....,X_{a_k})$ where
$0=a_1<...<a_k=t$ is a deterministic partition and $\phi$ is a continuous bounded function.
We obtain that for any $n\in\mathbb N$ the trading strategy $\bar \gamma^{(n)}$ can be approximated by trading strategies
in $\tilde\Gamma$.
This together with (\ref{3.2+++})--(\ref{3.2++++}) gives that
for any $\delta>0$ there exists
$\tilde\gamma\in\tilde\Gamma$ such that
\begin{equation}\label{3.5}
\mathbb P\left(V^{\gamma}_T>\delta+V^{\tilde\gamma}_T\right)<\delta
\end{equation}
and
\begin{equation}\label{3.6}
\mathbb P\left(\inf_{0\leq t\leq T}V^{\tilde\gamma}_t<2\epsilon-x\right)<\delta.
\end{equation}
Let $\tilde \gamma$ be of the form (\ref{3.3+})--(\ref{3.4}).
Define the strategies $\tilde\gamma^{(n)}\in\Gamma_n$, $n\in\mathbb N$ by
$$
\tilde\gamma^{(n)}_t=\sum_{i=1}^k \phi_i(X^{(n)}_{a_{i,1}},...,X^{(n)}_{a_{i,m_i}})\mathbb{I}_{t_i\leq t< t_{i+1}}.
$$
From (\ref{3.2}) and the fact that $\phi_i$, $i=1,...,k$ are continuous we obtain
\begin{equation}\label{3.7}
\lim_{n\rightarrow\infty}\sup_{0\leq t\leq T}|V^{\tilde\gamma}_t-V^{\tilde\gamma^{(n)}}_t|=0 \ \ \mbox{a.s.}
\end{equation}
By applying the Borel--Cantelli lemma and (\ref{3.5})--(\ref{3.7}) we obtain that there exists a subsequence $\gamma^{(n)}\in\Gamma_n$, $n\in\mathbb N$
which
satisfies (\ref{3.2+})--(\ref{3.2++}).

\emph{Step 3.}
Next, we modify the trading strategies $\gamma^{(n)}\in\Gamma_n$, $n\in\mathbb N$ in order to meet the admissibility requirements.
For any $n\in\mathbb N$ define the stopping time
$$\tau_n:=T\wedge \inf\{t:x+V^{\gamma^{(n)}}_t< \epsilon\},$$
and consider the trading strategy
$$\beta^{(n)}_t:=\gamma^{(n)}_t\mathbb {I}_{t<\tau_n}, \  \ t\in [0,T].$$
From (\ref{2.0}) and the definition of $\tau_n$ it follows that for any $n\in\mathbb N$
\begin{equation}\label{3.7+}
V^{\beta^{(n)}}_t=V^{\gamma^{(n)}}_{t}\mathbb{I}_{t<\tau_n}+V^{\gamma^{(n)}}_{\tau_n-}\mathbb{I}_{t\geq\tau_n}\geq\epsilon-x, \ \ t\in[0,T].
\end{equation}
Thus, $\beta^{(n)}\in\mathcal A^{(n)}(x)$, $n\in\mathbb N$.

From (\ref{3.2++}) we have
\begin{equation}\label{3.8}
\mathbb I_{\tau_n=T}\rightarrow 1 \ \  \mbox{a.s.}
\end{equation}
Recall (see the first paragraph after (\ref{2.0}))
that due to the liquidation at the maturity date, the portfolio value processes are continuous at time $T$.
Thus, from (\ref{3.2+}) and (\ref{3.7+})--(\ref{3.8})
 \begin{equation}\label{3.9}
\lim\inf_{n\rightarrow\infty}V^{\beta^{(n)}}_T\geq V^\gamma_T.
\end{equation}

Finally, from Fatou's Lemma, Assumption \ref{asm2.2+}(i), the inequality $x+V^{\beta^{(n)}}_T\geq\epsilon$,
(\ref{3.2}) and (\ref{3.9}) we obtain
$$\lim_{n\rightarrow\infty} u_n(x)\geq
\lim\inf_{n\rightarrow\infty} \mathbb E_{\mathbb P_n}\left[U\left(x+V^{\beta^{(n)}}_T,S^{(n)}\right)\right]
\geq \mathbb E_{\mathbb P}[U(x+V^\gamma_T-3\epsilon,S)]$$
and (\ref{3.1}) follows.
\end{proof}
Next, we prove the following key result.
\begin{lem}\label{lem.tight}
Let $x>0$ and
$\gamma^{(n)}\in\mathcal A^{(n)}(x)$, $n\in\mathbb N$
be a sequence of admissible trading strategies.
The sequence
$(X^{(n)},S^{(n)},\gamma^{(n)})$ is tight on the space
 $\mathbb D([0,T];\mathbb R^{m+1})\times \mathbb D_{MZ}[0,T]$ and so from Prohorov's theorem (see \cite{B})
 it is relatively compact. Moreover, any cluster point is of the form $(X,S,\gamma)$
 and satisfies the following conditional independence property:

 Let $(\mathcal F^{X,\gamma}_t)_{0\leq t\leq T}$
 be the usual filtration generated by $X$
 and $\gamma$. Then for any $t<T$,
 $\mathcal F^{X,\gamma}_t$ and $\mathcal F_T$ are conditionally independent given
 $\mathcal F_t$. As before $(\mathcal F_t)_{0\leq t\leq T}$ is the usual filtration
 generated by $X$.
\end{lem}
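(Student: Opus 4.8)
The plan is to separate the tightness claim from the structural claim about cluster points. For tightness the first two coordinates are already under control: Assumption~\ref{asm2.3}(ii) gives weak convergence---hence, by Prohorov, tightness---of $(X^{(n)},S^{(n)})$ on $\mathbb D([0,T];\mathbb R^{m+1})$, so the whole issue is to bound $\gamma^{(n)}$ in the Meyer--Zheng topology. I would do this through the strictly consistent price systems $(M^{(n)},\mathbb Q_n)$ of Assumption~\ref{asm2.2}(1). Put $\tilde V^{(n)}_t:=\gamma^{(n)}_tM^{(n)}_t-\int_0^tM^{(n)}_u\,d\gamma^{(n)}_u$, the integral taken over $[0,t]$ as in (\ref{2.0}); integration by parts gives $\tilde V^{(n)}_t=\int_{[0,t]}\gamma^{(n)}_{u-}\,dM^{(n)}_u$, so $\tilde V^{(n)}$ is a local $\mathbb Q_n$-martingale vanishing at $0$. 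Using $|M^{(n)}-S^{(n)}|\le(\kappa-\varepsilon)S^{(n)}$ together with (\ref{2.0}) one gets the pointwise estimate
\[
\tilde V^{(n)}_t\ \ge\ V^{\gamma^{(n)}}_t+\varepsilon\Bigl(|\gamma^{(n)}_t|S^{(n)}_t+\int_0^tS^{(n)}_u\,|d\gamma^{(n)}_u|\Bigr)\ \ge\ -x+\varepsilon\int_0^tS^{(n)}_u\,|d\gamma^{(n)}_u|,
\]
the last inequality by admissibility $x+V^{\gamma^{(n)}}_t\ge0$. Being bounded from below, $\tilde V^{(n)}$ is a $\mathbb Q_n$-supermartingale, whence $\mathbb E_{\mathbb Q_n}[\tilde V^{(n)}_T]\le0$ and therefore $\mathbb E_{\mathbb Q_n}\bigl[\int_0^TS^{(n)}_u\,|d\gamma^{(n)}_u|\bigr]\le x/\varepsilon$ for every $n$.

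The next step converts this $\mathbb Q_n$-estimate into $\mathbb P_n$-tightness of $\gamma^{(n)}$. By Markov's inequality the variables $\int_0^TS^{(n)}_u\,|d\gamma^{(n)}_u|$ are tight under $\mathbb Q_n$, hence, since $(\mathbb P_n)$ is contiguous to $(\mathbb Q_n)$ by Assumption~\ref{asm2.2}(2), also under $\mathbb P_n$. Moreover $S^{(n)}\Rightarrow S$ with $S$ continuous and strictly positive, so $\bigl(\inf_{0\le t\le T}S^{(n)}_t\bigr)^{-1}$ is tight under $\mathbb P_n$; combining this with $\int_0^T|d\gamma^{(n)}_u|\le\bigl(\inf_tS^{(n)}_t\bigr)^{-1}\int_0^TS^{(n)}_u\,|d\gamma^{(n)}_u|$ shows that $\int_0^T|d\gamma^{(n)}_u|$, and hence $\sup_t|\gamma^{(n)}_t|\le\int_0^T|d\gamma^{(n)}_u|$ (recall $\gamma^{(n)}_{0-}=0$), is tight under $\mathbb P_n$. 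Finally, a family of functions of uniformly bounded total variation is relatively compact in $\mathbb D_{MZ}[0,T]$: by Helly's selection theorem a subsequence converges pointwise off a countable set, and since such functions are uniformly bounded, dominated convergence upgrades this to convergence for the $d_{MZ}$-metric. Hence $\gamma^{(n)}$ is tight in $\mathbb D_{MZ}[0,T]$, the triple $(X^{(n)},S^{(n)},\gamma^{(n)})$ is tight on the product space, and Prohorov's theorem gives relative compactness; by Assumption~\ref{asm2.3}(ii) every cluster point has first two coordinates $(X,S)$.

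For the conditional-independence property, fix a bounded continuous $\psi:\mathbb D([0,T];\mathbb R^m)\to\mathbb R$ and set $Y^{(n)}_t:=\mathbb E_{\mathbb P_n}[\psi(X^{(n)})|\mathcal F^{(n)}_t]$ and $Y_t:=\mathbb E_{\mathbb P}[\psi(X)|\mathcal F_t]$. Along a subsequence realizing a cluster point $(X,S,\gamma)$ of the triple, the sequence $(X^{(n)},S^{(n)},\gamma^{(n)},Y^{(n)})$ is still tight---the last coordinate by Assumption~\ref{asm2.3}(iii)---so a further subsequence converges weakly to some $(X,S,\gamma,\tilde Y)$ with the same first three coordinates. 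Assumption~\ref{asm2.3}(iii) forces $(X,\tilde Y)\stackrel{d}{=}(X,Y)$; since $Y$ is a fixed measurable functional of $X$, this identity in law yields $\tilde Y=Y$ almost surely, so in fact $(X^{(n)},S^{(n)},\gamma^{(n)},Y^{(n)})\Rightarrow(X,S,\gamma,Y)$. Now use adaptedness of $\gamma^{(n)}$ to $\mathcal F^{(n)}$: for any bounded functional $G$ of $(X^{(n)},\gamma^{(n)})$ depending only on its restriction to $[0,t]$---an $\mathcal F^{(n)}_t$-measurable quantity---the tower property gives $\mathbb E_{\mathbb P_n}[\psi(X^{(n)})\,G]=\mathbb E_{\mathbb P_n}[Y^{(n)}_t\,G]$. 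Taking $G$ within the class of genuinely $d_{MZ}$-continuous functionals (formed from increments of $X$ and from integrals $\int_0^Th(u)\gamma_u\,du$ with $h$ supported in $[0,t]$, the tightness of $\sup_t|\gamma^{(n)}_t|$ providing the uniform integrability for the limit of the integrals), and $t$ outside a countable exceptional set, one lets $n\to\infty$ to obtain $\mathbb E_{\mathbb P}[\psi(X)\,G]=\mathbb E_{\mathbb P}[Y_t\,G]$. Since such $G$ generate $\mathcal F^{X,\gamma}_t$, this says $\mathbb E_{\mathbb P}[\psi(X)\mid\mathcal F^{X,\gamma}_t]=Y_t=\mathbb E_{\mathbb P}[\psi(X)\mid\mathcal F_t]$; letting $\psi$ run over all bounded continuous functions, hence over a generating class for $\mathcal F_T$, and passing from the co-countable set of admissible $t$ to every $t<T$ via right-continuity of the filtrations and L\'evy's downward theorem, we conclude that $\mathcal F^{X,\gamma}_t$ and $\mathcal F_T$ are conditionally independent given $\mathcal F_t$.

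The step I expect to be the main obstacle is this final limit passage in $\mathbb E_{\mathbb P_n}[\psi(X^{(n)})\,G]=\mathbb E_{\mathbb P_n}[Y^{(n)}_t\,G]$: the most natural functionals of $\gamma$ (its restriction to $[0,t]$, or its value $\gamma^{(n)}_t$) are not continuous for the Meyer--Zheng topology, so $G$ must be chosen inside the narrower class of truly $d_{MZ}$-continuous functionals, and one must still verify that these suffice to determine $\mathbb E_{\mathbb P}[\psi(X)\mid\mathcal F^{X,\gamma}_t]$; together with the careful treatment of the exceptional times $t$ and of uniform integrability, this is the delicate part. The tightness half, by comparison, is a fairly direct quantitative consequence of admissibility combined with the consistent price systems, contiguity, and Helly's theorem.
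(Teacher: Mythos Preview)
Your tightness argument is essentially the paper's: the same supermartingale estimate from the strictly consistent price system gives $\mathbb E_{\mathbb Q_n}\bigl[\int_0^T S^{(n)}_u\,|d\gamma^{(n)}_u|\bigr]\le x/\varepsilon$, and then contiguity, Markov, and a lower bound on $\inf_t S^{(n)}_t$ (from $S^{(n)}\Rightarrow S>0$) yield tightness of $\int_0^T|d\gamma^{(n)}_u|$; Helly/Meyer--Zheng compactness finishes. (One cosmetic point: the paper writes $\int M^{(n)}_{u-}\,d\gamma^{(n)}_u$ in the integration by parts, which matters if $M^{(n)}$ has jumps; your display suppresses this but the estimate is unaffected since $|M^{(n)}_{t-}-S^{(n)}_t|\le(\kappa-\varepsilon)S^{(n)}_t$ as well.)

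For the conditional-independence half the overall template is the same---fix $\psi$, form $Y^{(n)},Y$, use adaptedness of $\gamma^{(n)}$ to rewrite $\mathbb E[\psi(X^{(n)})G]=\mathbb E[Y^{(n)}_tG]$, and pass to the limit---but the technical device differs. The paper first invokes Skorohod's representation to get a.s.\ convergence of $(S^{(n)},X^{(n)},Y^{(n)},\gamma^{(n)})$, then applies Fubini to extract a set $I\subset[0,T]$ of full Lebesgue measure on which $\gamma^{(n)}_u\to\gamma_u$ in probability; it tests against $Z_3=\psi_3(X_{t_1},\dots,X_{t_k},\gamma_{t_1},\dots,\gamma_{t_k})$ with $t_i\in I\cap[0,u]$ for some $u>t$, and lets $u\downarrow t$ via right-continuity of $(\mathcal F_t)$. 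You instead stay at the level of weak convergence and use genuinely $d_{MZ}$-continuous test functionals built from $\int_0^T h(u)\gamma_u\,du$ with $\operatorname{supp}h\subset[0,t]$. Your route can be made to work, but the step ``such $G$ generate $\mathcal F^{X,\gamma}_t$'' needs an argument (for RCLL $\gamma$ these integrals recover $\gamma_s$ for $s<t$ and $\gamma_{t-}$, so one must pass through the right-continuous augmentation), and the handling of the exceptional times and uniform integrability that you flag as delicate is exactly what the paper's Skorohod/Fubini trick sidesteps by producing a concrete dense set of good sampling times.
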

\begin{proof}
\emph{Step 1.}
In \cite{MZ} (see Lemma 8 there) the authors proved that for any $c>0$
the set
$$\left\{f: \ f \ \mbox{is} \ \mbox{of} \  \mbox{bounded} \  \mbox{variation} \ \mbox{and} \ \int_{0}^T |df(t)|\leq c\right\}\subset
\mathbb D_{MZ}[0,T]$$ is compact.
Thus, in order to prove tightness (in the Meyer--Zheng topology) of the sequence
$\gamma^{(n)}\in\mathcal A^{(n)}(x)$, $n\in\mathbb N$,
it is sufficient to prove that for any $\delta>0$ there exists $c>0$ and $N\in\mathbb N$ such that
\begin{equation}\label{2+.1}
\mathbb P_n\left(\int_{0}^T|d\gamma^{(n)}_t|> c\right)<2\delta, \ \ \forall n>N.
\end{equation}
Choose $\delta>0$.
Since $S$ is strictly positive then the weak convergence $S^{(n)}\Rightarrow S$ implies that there exists $\hat\delta>0$ and $N\in\mathbb N$ such that
\begin{equation}\label{2+.2}
\mathbb P_n\left(\inf_{0\leq t\leq T}S^{(n)}_t<\hat\delta\right)<\delta, \ \ \forall n>N.
\end{equation}
Next, recall the
processes $M^{(n)}$, $n\in\mathbb N$
and the probability measures $\mathbb Q_n$, $n\in\mathbb N$ given
by Assumption \ref{asm2.2}.
From the inequalities
$$|M^{(n)}_t-S^{(n)}_t|, |M^{(n)}_{t-}-S^{(n)}_{t}|\leq (\kappa-\varepsilon)S^{(n)}_t, \ \ t\in [0,T],$$
 the admissibility property of $\gamma^{(n)}\in\mathcal A^{(n)}(x)$, $n\in\mathbb N$ and the integration by parts formula we get
\begin{equation*}
\begin{split}
0&\leq x+\gamma^{(n)}_tS^{(n)}_t-\int_{0}^t S^{(n)}_ud\gamma^{(n)}_u-\kappa |\gamma^{(n)}_t|S^{(n)}_t -\kappa\int_{0}^t S^{(n)}_u|d\gamma^{(n)}_u|\\
&\leq x
+\gamma^{(n)}_tM^{(n)}_t+(\kappa-\varepsilon)|\gamma^{(n)}_t| S^{(n)}_t-\int_{0}^t M^{(n)}_{u-}d\gamma^{(n)}_u
+(\kappa-\varepsilon)\int_{0}^t S^{(n)}_{u}|d\gamma^{(n)}_u|\\
&-\kappa |\gamma^{(n)}_t|S^{(n)}_t-\kappa\int_{0}^t S^{(n)}_u|d\gamma^{(n)}_u|\\
&\leq x+\int_{0}^t \gamma^{(n)}_{u-}dM^{(n)}_u-\varepsilon \int_{0}^t S^{(n)}_u|d\gamma^{(n)}_u|, \  \ \ \forall t\in [0,T].
\end{split}
\end{equation*}
Thus, for any $n\in\mathbb N$
\begin{equation}\label{con}
\mathbb E_{\mathbb Q_n}\left[\int_{0}^T S^{(n)}_u|d\gamma^{(n)}_u|\right]\leq\frac{x}{\varepsilon}.
\end{equation}
From the Markov inequality
and the fact that the sequence $\mathbb P_n$, $n\in\mathbb N$
is contiguous to the sequence $\mathbb Q_n$, $n\in\mathbb N$ we
conclude that there exists $\hat c>0$ such that
\begin{equation*}
\mathbb P_n\left(\int_{0}^T S^{(n)}_u|d\gamma^{(n)}_u|>\hat c\right)<\delta, \ \ \forall n\in\mathbb N.
\end{equation*}
This together with (\ref{2+.2}) gives that (\ref{2+.1}) holds true for
$c:=\frac{\hat c }{\hat\delta}$ and tightness follows.

\emph{Step 2.}
From Assumption \ref{asm2.3}(ii) we conclude that $(S^{(n)},X^{(n)},\gamma^{(n)})$,
 $n\in\mathbb N$ is tight on the space
 $\mathbb D([0,T];\mathbb R^{m+1})\times \mathbb D_{MZ}[0,T]$ and so from Prohorov's theorem it is relatively compact.
 Moreover, from Assumption \ref{asm2.3}(ii) it follows that any cluster point is of the form $(S,X,\gamma)$, namely the distribution of the first two components
 equals to the distribution of the pair $(S,X)$. The process $\gamma$ is of bounded variation
with right continuous paths.

It remains to establish the conditional independence property. With abuse of notations we denote by $\mathbb E_{\mathbb P}$
the expectation on the corresponding probability space.

Choose $t<T$. We need to show (see Definition 43 in \cite{DM}) that
for any  bounded random variable $Z_1$ which is $\mathcal F_T$ measurable and
a bounded random variable $Z_3$ which is $\mathcal F^{X,\gamma}_t$ measurable we
have the equality
$$\mathbb E_{\mathbb P} (Z_1 Z_3|\mathcal F_t)=\mathbb E_{\mathbb P} (Z_1|\mathcal F_t)\mathbb E_{\mathbb P} ( Z_3|\mathcal F_t). $$
This is equivalent to proving that for any bounded random variable $Z_2$ which is $\mathcal F_t$ measurable we have
\begin{equation*}
\mathbb E_{\mathbb P}(Z_1 Z_2 Z_3)=\mathbb E_{\mathbb P}\left(\mathbb E_{\mathbb P}(Z_1|\mathcal F_t) Z_2 Z_3\right).
\end{equation*}
Since the filtration $(\mathcal F_u)_{0\leq u\leq T}$ is right continuous then the above equality follows from the equality
\begin{equation}\label{2.12}
\mathbb E_{\mathbb P}(Z_1 Z_2 Z_3)=\mathbb E_{\mathbb P}\left(\mathbb E_{\mathbb P}(Z_1|\mathcal F_u) Z_2 Z_3\right), \ \ \forall u>t.
\end{equation}
Standard density arguments yield that without loss of generality we can assume that
 $Z_1=\psi_1(X)$ for a continuous, bounded function $\psi_1:\mathbb D([0,T];\mathbb R^m)\rightarrow\mathbb R$.
Let
$$Y^{(n)}_u:=\mathbb E_{\mathbb P_n}\left(\psi_1(X^{(n)})|\mathcal F^{(n)}_u\right) \ \ \mbox{and} \ \
Y_u:=\mathbb E_{\mathbb P}\left(\psi_1(X)|\mathcal F_u\right), \ \ u\in [0,T].$$
By passing to a subsequence,
we can assume without loss of generality that
$\left(S^{(n)},X^{(n)},\gamma^{(n)}\right)$ converge to $(S,X,\gamma)$.
From Assumption \ref{asm2.3}(ii)--(iii) we obtain that the sequence
$\left(S^{(n)},X^{(n)},Y^{(n)},\gamma^{(n)}\right)$, $n\in\mathbb N$ is tight on the space
$\mathbb D([0,T];\mathbb R^{m+1})\times \mathbb D([0,T];\mathbb R)\times \mathbb D_{MZ}[0,T]$
and so from Prohorov's theorem it is relatively compact.
Moreover, Assumptions \ref{asm2.3}(ii)--(iii) imply that any cluster point is of the form $(S,X,Y,\gamma)$.

From the Skorohod representation theorem (Theorem 3 in \cite{D})
there exists a probability space
which contains a subsequence
$(S^{(n)},X^{(n)},Y^{(n)},\gamma^{(n)})$, $n\in\mathbb N$ and $(S,X,Y,\gamma)$ on the same probability space
such that
\begin{equation}\label{2.10}
(S^{(n)},X^{(n)},Y^{(n)},\gamma^{(n)})\rightarrow (S,X,Y,\gamma) \ \ \mbox{a.s. }
\end{equation}
on the space  $\mathbb D([0,T];\mathbb R^{m+1})\times \mathbb D([0,T];\mathbb R)\times \mathbb D_{MZ}[0,T]$.

Next, from the bounded convergence theorem we have
 $$\lim_{n\rightarrow\infty} \mathbb E\left(\int_{0}^T \min(1,|\gamma^{(n)}_u-\gamma_u|)du\right)=0,$$
 where $\mathbb E$ denotes the expectation on the common probability space.
From Fubini's theorem we conclude that there exists a subset $I\subset [0,T]$ of a full Lebesgue measure such that
\begin{equation}\label{2.11}
\gamma_u=\lim_{n\rightarrow \infty}\gamma^{(n)}_u, \ \ \forall u\in I,
\end{equation}
where the limit is in probability.

Choose $u>t$ in (\ref{2.12}). Again, standard density arguments imply that without loss of generality we can assume that
$Z_2,Z_3$ in (\ref{2.12}) are of the form: $Z_2=\psi_2(X_{t_1},...,X_{t_k})$ for some $k\in\mathbb N$,
 $t_1,...,t_k\in I\cap [0,u]$ and a continuous bounded function
 $\psi_2:(\mathbb R^m)^k\rightarrow\mathbb R$ and
 $Z_3=\psi_3(X_{t_1},...,X_{t_k},\gamma_{t_1},...,\gamma_{t_k})$
 for a continuous bounded function
  $\psi_3:(\mathbb R^m)^k\times \mathbb R^k\rightarrow\mathbb R$.

From the bounded convergence theorem, the fact that $\gamma^{(n)}$ is $\mathcal F^{(n)}$ adapted and
(\ref{2.10})--(\ref{2.11}) we obtain
\begin{equation*}
\begin{split}
& \mathbb E_{\mathbb P}(Z_1 Z_2 Z_3)\\
&=\lim_{n\rightarrow\infty} \mathbb E_{\mathbb P_n}\left(\psi_1(X^n) \psi_2(X^{(n)}_{t_1},...,X^{(n)}_{t_k})
\psi_3(X^{(n)}_{t_1},...,X^{(n)}_{t_k},\gamma^{(n)}_{t_1},...,\gamma^{(n)}_{t_k})\right)\\
&=\lim_{n\rightarrow\infty} \mathbb E_{\mathbb P_n}\left(Y^{(n)}_u \psi_2(X^{(n)}_{t_1},...,X^{(n)}_{t_k})
\psi_3(X^{(n)}_{t_1},...,X^{(n)}_{t_k},\gamma^{(n)}_{t_1},...,\gamma^{(n)}_{t_k})\right)\\
&=\mathbb E_{\mathbb P}(Y_u Z_2 Z_3).
\end{split}
\end{equation*}
This completes the proof of (\ref{2.12}).
\end{proof}

Now, we are ready to complete the proof of the main results.

\subsection{Completion of the Proof of Theorems \ref{thm2.1}--\ref{thm2.2}.}
\begin{proof}
We will prove these two results together.

\emph{Step 1.}
Let $x>0$ and let $\hat\gamma^{(n)}\in\mathcal{A}^{(n)}(x)$, $n\in\mathbb N$ be a sequence of portfolios
 which satisfy (\ref{optimal}).
 By passing to a subsequence we assume without loss of generality that
 $\lim_{n\rightarrow\infty}u_n(x)$
 exists. From (\ref{optimal}) $\lim_{n\rightarrow\infty} \mathbb E_{\mathbb P_n}\left[U\left(x+V^{\hat\gamma^{(n)}}_T,S^{(n)}\right)\right]$
 exists as well.

 From Lemma \ref{lem.tight} we obtain that
 $(S^{(n)},X^{(n)},\hat\gamma^{(n)})$,
 $n\in\mathbb N$ is tight and
any cluster point
is of the form $(S,X,\hat\gamma)$. Obviously $\hat\gamma$ is a right continuous process of bounded variation. Moreover,
since $\hat\gamma^{(n)}_T=0$ for all $n$,
$\hat \gamma_T=0$ as well.
Thus, we define $(V^{\hat\gamma}_t)_{0\leq t\leq T}$ by (\ref{2.0}).

Let us prove the admissibility condition
\begin{equation}\label{2.23}
x+V^{\hat\gamma}_t\geq 0, \ \ \forall t\in [0,T],
\end{equation}
and the inequality
\begin{equation}\label{2.24}
\mathbb E_{\mathbb P}[U(x+V^{\hat\gamma}_T,S)]\geq \lim_{n\rightarrow\infty} u_n(x).
\end{equation}
From (\ref{2+.1})
it follows that the sequence
$$\left(S^{(n)},X^{(n)},\hat\gamma^{(n)},\int_{0}^T |d\hat\gamma^{(n)}_t|\right), \ \ n\in\mathbb N,$$
 is tight on the space
$\mathbb D([0,T];\mathbb R^{m+1})\times \mathbb D_{MZ}[0,T]\times\mathbb R$.

Thus, by passing to a further subsequence and applying the Skorohod representation theorem we obtain that there exists a common probability space
where we have the almost surely convergence
\begin{equation}\label{2.23+}
\left(S^{(n)},X^{(n)},\hat\gamma^{(n)},\int_{0}^T |d\hat\gamma^{(n)}_t|\right)\rightarrow (S,X,\hat\gamma,\eta) \ \ \mbox{a.s.}
\end{equation}
for some random variable $\eta<\infty$. We conclude that
\begin{equation}\label{2.25}
\sup_{n\in\mathbb N} \int_{0}^T |d\hat\gamma^{(n)}_t|<\infty \ \ \mbox{a.s.}
\end{equation}
Next, using the same arguments as before (\ref{2.11}) gives that there exists a subset $I\subset [0,T]$ of a full Lebesgue measure such that
$\hat\gamma^{(n)}_u\rightarrow\hat\gamma_u$ in probability for all $u\in I$.

From a diagonalization argument, it follows that there exists a countable dense set $\hat I$ and a subsequence
$(S^{(n)}, X^{(n)}, \hat \gamma^{(n)})$, $n\in\mathbb N$,
such that
\begin{equation}\label{2.26}
\hat\gamma_u=\lim_{n\rightarrow \infty} \hat \gamma^{(n)}_u \ \ \mbox{a.s.} \ \ \forall u\in \hat I.
\end{equation}
Since $\hat\gamma_T=0$ and $\hat\gamma^{(n)}_T=0$, $n\in\mathbb N$ then without loss of generality we assume that
$T\in\hat I$.

Let $t\in\hat I$.
From Theorem 12.16 in \cite{MP} and (\ref{2.23+})--(\ref{2.26})
\begin{equation}\label{4.refer1}
\int_{0}^t S_u d\hat\gamma_u=\lim_{n\rightarrow \infty}\int_{0}^t S^{(n)}_u d\hat\gamma^{(n)}_u \ \ \mbox{a.s.}
\end{equation}
Next, choose
a partition $\{0=a_0<a_1<...<a_k=t\}\subset\hat I$.
From (\ref{2.23+})--(\ref{2.26})
\begin{equation*}
\begin{split}
\lim\inf_{n\rightarrow\infty} \int_{0}^t S^{(n)}_u |d\hat\gamma^{(n)}_u|
&= \lim\inf_{n\rightarrow\infty} \int_{0}^t S_u |d\hat\gamma^{(n)}_u|\\
&\geq \lim\inf_{n\rightarrow\infty}\sum_{i=0}^{k-1}\min_{a_i\leq u\leq a_{i+1}} S_u |\hat\gamma^{(n)}_{a_{i+1}}-\hat\gamma^{(n)}_{a_i}|\\
&=\sum_{i=0}^{k-1}\min_{a_i\leq u\leq a_{i+1}} S_u |\hat\gamma_{a_{i+1}}-\hat\gamma_{a_i}|   .
\end{split}
\end{equation*}
By taking the mesh of the
partition to zero we conclude
\begin{equation}\label{4.refer2}
\lim\inf_{n\rightarrow\infty} \int_{0}^t S^{(n)}_u |d\hat\gamma^{(n)}_u|\geq \int_{0}^t S_u |d\hat\gamma_u|  \ \ \mbox{a.s.}
\end{equation}
From (\ref{2.26})--(\ref{4.refer2})
\begin{equation}\label{2.27}
V^{\hat\gamma}_t\geq\lim\sup_{n\rightarrow\infty} V^{\hat\gamma^{(n)}}_t\geq -x \ \ \mbox{a.s.} \ \ \forall t\in \hat I.
\end{equation}
Since $\hat I$ is a dense set which contains $T$ and
$(V^{\hat\gamma}_t)_{0\leq t\leq T}$ is RCLL we obtain (\ref{2.23}).
From Assumption \ref{asm2.2+}(ii), (\ref{optimal}) and (\ref{2.27}) (for $t=T$) we conclude (\ref{2.24}).

\emph{Step 2.}
From the conditional independence property that was proved in Lemma \ref{lem.tight}
it follows that any martingale with respect to the filtration $(\mathcal F_t)_{0\leq t\leq T}$
is also a martingale with respect to the filtration $(\mathcal F^{X,\hat\gamma}_t)_{0\leq t\leq T}$. Thus, we redefine
the probability measure $\mathbb Q$ and the $\mathbb Q$ local--martingale $M$ (from Assumption \ref{asm2.2}) on the common probability space.

Since $M$ is a $\mathbb Q$ local--martingale with respect to the filtration $(\mathcal F^{X,\hat\gamma}_t)_{0\leq t\leq T}$,
then by
the same arguments as before (\ref{con}), we obtain that
(\ref{2.23}) implies
$
\mathbb E_{\mathbb Q}\left[\int_{0}^T S_u|d\hat\gamma_u|\right]\leq\frac{x}{\varepsilon}.
$
This together with the inequality $\inf_{0\leq u\leq T}S_u>0$ a.s. and
the fact that $\frac{d\mathbb Q}{d\mathbb P}$
is $\mathcal F_T$ measurable
gives
\begin{equation}\label{2.51}
\mathbb E_{\mathbb P}\left(\int_{0}^T |d\hat\gamma_u|\bigg|\mathcal F_T\right)<\infty \ \ \mbox{a.s.}
\end{equation}
As usual for any non--negative random variable $\xi\geq 0$ and a $\sigma$--algebra
$\mathcal G$ we define the conditional expectation $\mathbb E(\xi|\mathcal G)$ as a non--negative extended random variable.
For a general random variable $\xi$ we use the convention
$\mathbb E(\xi|\mathcal G):=\mathbb E(\xi^{+}|\mathcal G)-\mathbb E(\xi^{-}|\mathcal G)$
a.s. on the set where $\mathbb E(\xi^{+}|\mathcal G),\mathbb E(\xi^{-}|\mathcal G)<\infty$
and $\mathbb E(\xi|\mathcal G):=\infty$ on the complement.

Let $\hat\gamma_t:=\hat\gamma^{+}_t-\hat\gamma^{-}_t$, $t\in [0,T]$
be the Jordan
decomposition into a positive variation
$(\hat\gamma^{+}_t)_{0\leq t\leq T}$ and a negative variation
$(\hat\gamma^{-}_t)_{0\leq t\leq T}$.

Define the processes
$$\hat\beta^{+}_t:=\mathbb E_{\mathbb P}\left(\hat\gamma^{+}_t|\mathcal F_T\right), \ \
\hat\beta^{-}_t:=\mathbb E_{\mathbb P}\left(\hat\gamma^{-}_t|\mathcal F_T\right), \ \ t\in [0,T].
$$
From (\ref{2.51}) we obtain that
$(\hat\beta^{+}_t)_{0\leq t\leq T}$ and $(\hat\beta^{-}_t)_{0\leq t\leq T}$ are non--decreasing and non--negative processes
which satisfy  $\hat\beta^{+}_T,\hat\beta^{-}_T<\infty$ a.s.
Moreover, from the dominated convergence theorem (for conditional expectation), the fact that $\hat\gamma$ is right continuous
and (\ref{2.51}) it follows that
$\hat\beta^{+},\hat\beta^{-}$
are right continuous as well.

Next, fix $t\in [0,T]$ and $n\in\mathbb N$. From \cite{DM} (see Chapter 2, Theorem 45)
and the conditional independence property that was proved in Lemma \ref{lem.tight}
 $$
\mathbb E_{\mathbb P}\left(\hat\gamma^{+}_t\wedge n|\mathcal F_t\right)= \mathbb E_{\mathbb P}\left(\hat\gamma^{+}_t\wedge n|\mathcal F_T\right).$$
By taking $n\rightarrow\infty$ we get
\begin{equation}\label{2.i}
\mathbb E_{\mathbb P}\left(\hat\gamma^{+}_t|\mathcal F_t\right)= \hat{\beta}^{+}_t, \ \ \forall t\in [0,T].
\end{equation}
Similarly,
\begin{equation}\label{2.ii}
\mathbb E_{\mathbb P}\left(\hat\gamma^{-}_t|\mathcal F_t\right)= \hat{\beta}^{-}_t, \ \ \forall t\in [0,T].
\end{equation}
Thus, from (\ref{portfolio})
\begin{equation}\label{2.52}
\hat\gamma^{\mathcal F}_t=\hat\beta^{+}_t-\hat\beta^{-}_t=\mathbb E_{\mathbb P}(\hat\gamma_t|\mathcal F_T), \ \ \forall t\in [0,T].
\end{equation}
We conclude that $(\hat\gamma^{\mathcal F}_t)_{0\leq t\leq T}$ is a right continuous process of bounded variation.
Clearly, (\ref{portfolio}) implies that $(\hat\gamma^{\mathcal F}_t)_{0\leq t\leq T}$
is adapted to the filtration
$(\mathcal F_t)_{0\leq t\leq T}$.

From (\ref{2.51})
we have
$
\mathbb E_{\mathbb P}\left(\sup_{0\leq t\leq T}S_t\int_{0}^T |d\hat\gamma_u|\big|\mathcal F_T\right)<\infty $ a.s.
Hence, from the dominated convergence theorem and (\ref{2.i})--(\ref{2.52})
\begin{equation*}
\int_{0}^t S_u d\hat\gamma^{\mathcal F}_u=\mathbb E_{\mathbb P}\left( \int_{0}^t S_u d\hat\gamma_u\big|\mathcal F_T\right), \ \ \forall t\in [0,T]
\end{equation*}
and
\begin{equation*}
\begin{split}
\int_{0}^t S_u |d\hat\gamma^{\mathcal F}_u|&\leq \int_{0}^t S_u d\hat\beta^{+}_u+\int_{0}^t S_u d\hat\beta^{-}_u\nonumber\\
&=\mathbb E_{\mathbb P}\left( \int_{0}^t S_u d\hat\gamma^{+}_u\big|\mathcal F_T\right)+
\mathbb E_{\mathbb P}\left( \int_{0}^t S_u d\hat\gamma^{-}_u\big|\mathcal F_T\right)\nonumber\\
&= \mathbb E_{\mathbb P}\left(\int_{0}^t S_u |d\hat\gamma_u|\big|\mathcal F_T\right), \ \ \forall t\in [0,T].
\end{split}
\end{equation*}
This together with (\ref{2.0}) and the simple relations
$$\hat\gamma^{\mathcal F}_t S_t=\mathbb E_{\mathbb P}(\hat\gamma_t S_t\big|\mathcal F_T),
\ \ |\hat\gamma^{\mathcal F}_t S_t|\leq\mathbb E_{\mathbb P}(|\hat\gamma_t| S_t\big|\mathcal F_T), \ \ \forall t\in [0,T]$$
gives
\begin{equation}\label{2.53}
V^{\hat\gamma^{\mathcal F}}_t\geq \mathbb E_{\mathbb P}(V^{\hat{\gamma}}_t|\mathcal F_T), \ \ \forall t\in [0,T],
\end{equation}
and so from (\ref{2.23}) $\hat\gamma^{\mathcal F}\in\mathcal A(x)$.

Thus, from the Jensen inequality,
Lemma \ref{lem3.2},
(\ref{2.24}) and (\ref{2.53})
$$
\lim_{n\rightarrow\infty} u_n(x)\geq u(x)\geq \mathbb E_{\mathbb P}[U(x+V^{\hat\gamma^{\mathcal F}}_T,S)]
\geq
\mathbb E_{\mathbb P}[U(x+V^{\hat\gamma}_T,S)]\geq\lim_{n\rightarrow\infty} u_n(x),$$
and the proof is completed.

\end{proof}
\section{Example: Transaction Costs Make Things Converge}\label{sec:5}
Consider a random utility which corresponds to shortfall risk minimization
for a call option with strike price $K>0$. Namely, we set
$$U(v,s):=-\left(\left(s_T-K\right)^{+}-v\right)^{+}, \ \ \forall (v,s)\in [0,\infty) \times C[0,T].$$

Next, let $\xi_i=\pm 1$, $i\in\mathbb N$ be i.i.d. and symmetric. For any $n\in\mathbb N$
define the scaled random walks
$X^{{n,1}}_t, X^{{n,2}}_t$, $t\in [0,T]$ by
\begin{equation*}
\begin{split}
X^{{n,1}}_t&:=\sqrt\frac{T}{n}\sum_{i=1}^{[nt/T]} \xi_i, \\
X^{n,2}_t&:=\sqrt\frac{T}{n}\sum_{i=1}^{[nt/T]} \prod_{j=1}^i \xi_j,
\end{split}
\end{equation*}
where $[\cdot]$ is the integer part of $\cdot$ and we set $\sum_{i=1}^0\equiv 0$.

Let $X^{(n)}:=(X^{{n,1}},X^{{n,2}})$
and let $(\mathcal F^{(n)})_{0\leq t\leq T}$ be the usual filtration generated by $X^{(n)}$. Observe that
$X^{(n)}$ is a martingale.
From Lemma 3.1 in \cite{BDG} it follows that we have the weak convergence
$X^{(n)}\Rightarrow X$
where $X=(X^{(1)},X^{(2)})$ is a standard two dimensional Brownain motion.
From
Corollary 6 in \cite{JS} we conclude the extended weak convergence
\begin{equation}\label{5.1}
X^{(n)}\Rrightarrow X.
\end{equation}
We remark that although \cite{JS} deals only with real valued processes,
the extension of the results there to the multidimensional case is straightforward.

Now, we introduce the financial markets.
For any $n\in\mathbb N$ define the discrete time stochastic processes
$\{\tilde\nu^{(n)}_k\}_{k=0}^n$ and $\{\tilde S^{(n)}_k\}_{k=0}^n$ by
\begin{equation*}
\begin{split}
\tilde\nu^{(n)}_k&:=\prod_{i=1}^k \left(1+\sqrt\frac{T}{n}\xi_i\right),\\
\tilde S^{(n)}_k&:=\prod_{i=1}^k \left(1+\min\left(\nu^{(n)}_{\frac{(i-1)T}{n}},\ln n\right)\sqrt\frac{T}{n}\prod_{j=1}^i\xi_j\right),
\end{split}
\end{equation*}
we set $\prod_{i=1}^0\equiv 1$.
Let $S^{(n)}=(S^{(n)}_t)_{0\leq t\leq t}$, $n\in\mathbb N$ be the following linear interpolation
\begin{equation}\label{5.0}
S^{(n)}_t:=(\left([nt/T]+1\right)T-nt)\tilde S^{(n)}_{[nt/T]-1}+\left(nt-[nt/T]T\right)  \tilde S^{(n)}_{[nt/T]},
\end{equation}
where we set $\tilde S^{(n)}_{-1}\equiv 1$.
We take a shift of one time period in order
to make $S^{(n)}$ adapted to $\mathcal F^{(n)}$.
For any $n$ define the process $(\nu^{(n)}_t)_{0\leq t\leq T}$by
$\nu^{(n)}_t:=\tilde\nu^{(n)}_{[nt/T]}$.

Using the same arguments as in Example 3.3 in \cite{BDG} we obtain the weak convergence
\begin{equation}\label{5.2}
(\nu^{(n)}, S^{(n)},X^{(n)})\Rightarrow (\nu,S,X),
\end{equation}
where $(\nu,S)$ is the (unique strong) solution of the SDE (Hull and White model)
\begin{equation}\label{5.2+}
\begin{split}
d\nu_t&=\nu_t dX^{(1)}_t, \ \ \nu_0=1, \\
dS_t&=\nu_t S_t dX^{(2)}_t, \ \ S_0=1.
\end{split}
\end{equation}

Next, we verify
Assumptions \ref{asm2.1}-\ref{asm2.3}.
Clearly, $\mathbb E_{\mathbb P}S_T\leq S_0=1$. Thus Assumption \ref{asm2.1}(i) holds true.
From Remark \ref{rem.short} it follows that
Assumption \ref{asm2.1}(ii) holds true as well.

Let $\kappa>0$ be the rate of the transaction costs.
Observe that $\{\tilde S^{(n)}_k\}_{k=0}^n$ is a martingale. Thus, for sufficiently
large $n$ we obtain
that the martingale $M^{(n)}=(M^{(n)}_t)_{0\leq t\leq T}$ given by
$M^{(n)}_t:=\tilde S^{(n)}_{[nt/T]}$
satisfies
$|M^{(n)}-S^{(n)}|\leq \frac{\kappa}{2} S^{(n)}.$
Hence, Assumptions \ref{asm2.2} holds true
for $\mathbb Q_n:=\mathbb P_n$. For the limit model we just take
$M:=S$ and $\mathbb Q:=\mathbb P$.

Similar arguments as in Example 3.3 in \cite{BDG} yield that the random variables
$\{S^{(n)}_T\}_{n\in\mathbb N}$ are uniformly integrable. This gives Assumption \ref{asm2.2+}(i).
Since $U^{+}=0$ then Assumption \ref{asm2.2+}(ii) is trivial.
Finally, Assumption \ref{asm2.3} follows from (\ref{5.1}) and (\ref{5.2}).

From Theorem \ref{thm2.1} it follows that
in the presence of proportional transaction costs, the shortfall risks in the models given by (\ref{5.0})
converge to the shortfall risk in the Hull--White stochastic volatility model given by (\ref{5.2+}).

On the other hand, for the frictionless case i.e. $\kappa=0$ there is no convergence.
In fact, the models given by (\ref{5.0})
are not arbitrage free. Observe that for any interval of the form
$[kT/n, (k+1) T/n]$, $k\geq 1$, immediately after time $kT/n$ the investor knows all the stock prices in this interval.
Thus, we have an obvious arbitrage, which makes the shortfall risk equal to zero for any initial capital.
Clearly, this is not the case for the limit model.

Moreover, assume that for the $n$--step model
we restrict the trading times to the set
$0,T/n,2  T/n,...,T$.
We notice that for $k=1,...,n-1$
the conditional support
$supp \left(S^{(n)}_{\frac{(k+1)T}{n}}|S^{(n)}_{\frac{T}{n}},...,S^{(n)}_{\frac{kT}{n}}\right)$
consists of exactly two points.
Namely, the discretized market which is active at times $0,T/n,2  T/n,...,T$ is complete, in particular arbitrage free.
Still,
even with this restriction
(in the frictionless setup) the shortfall risk in the Hull--White model is
strictly bigger than the limit of the shortfall risk
in the models given by $S^{(n)}$, $n\in\mathbb N.$
This fact was established in Example 3.3 in \cite{BDG}.

\end{document}